\newcommand{\T}{{\rm T}}
\newcommand{\cc}{\mathbb{C}}
\newcommand{\g}{\mathcal{G}}
\newcommand{\h}{\mathcal{H}}
\newcommand{\s}{\mathcal{S}}
\newcommand{\qc}{\mathcal{C}}
\newcommand{\nn}{\mathcal{N}}
\newcommand{\rn}{\mathcal{R}}
\newcommand{\mxn}{\mod X^n+1}
\newcommand{\ba}{\boldsymbol{a}}
\newcommand{\bb}{\boldsymbol{b}}
\newcommand{\bc}{\boldsymbol{c}}
\newcommand{\bd}{\boldsymbol{d}}
\newcommand{\bu}{\boldsymbol{u}}
\newcommand{\bv}{\boldsymbol{v}}
\newcommand{\bx}{\boldsymbol{x}}
\newcommand{\by}{\boldsymbol{y}}
\newcommand{\ze}{\boldsymbol{0}}
\newcommand{\stil}{\widetilde{S}}
\newcommand{\ifif}{if and only if }
\newcommand{\wrt}{with respect to }
\newcommand{\fp}{\mathbb{F}_p}
\newcommand{\fpn}{\mathbb{F}_p^n}
\newcommand{\fps}{\mathbb{F}_{p^2}}
\newcommand{\fpe}{\mathbb{F}_p(\eta)}
\newcommand{\fpsn}{\mathbb{F}_{p^2}^n}
\newcommand{\fpen}{\mathbb{F}_p(\eta)^n}
\newcommand{\htn}{\mathcal{H}^{\otimes^n}}
\newcommand{\fpns}{\mathbb{F}_p^n\times\mathbb{F}_p^n}
\newcommand{\ssm}{\mbox{{\footnotesize\textbackslash}}}
\newcommand{\ket}[1]{\left| #1 \right\rangle}
\newcommand{\braket}[1]{\left\langle #1 \right\rangle}
\newcommand{\ideal}[1]{\left\langle #1 \right\rangle}
\newtheorem{definition}{Definition}[section]
\newtheorem{proposition}{Proposition}[section]
\newtheorem{corollary}[proposition]{Corollary}
\newtheorem{theorem}[proposition]{Theorem}
\newtheorem{claim}[proposition]{Claim}
\newtheorem{lemma}[proposition]{Lemma}
\numberwithin{equation}{section}
\title{{\bf $t$-Frobenius Negacyclic Codes}}
\author{Priyabrata Bag \& Santanu Dey}
\date{}
\begin{document}
\maketitle

\begin{abstract}
\noindent Let $\fp$ denote the finite field of order $p$, where $p$ is an odd prime.
We study certain quantum negacyclic codes over $\fp$ which we call $t$-Frobenius
negacyclic codes. We obtain a criterion for constructing such codes from certain
subspaces of $\fpn\times\fpn$ in terms of generating pairs of ideals of cyclotomic
rings and we completely classify all linear $t$-Frobenius negacyclic codes. Further,
the notion of BCH distance is extended for these codes and several new codes are
identified.\\
\mbox{}\\
\noindent{\bf Keywords:} Quantum stabilizer code; quantum negacyclic code; totally
isotropic subspace; Frobenius automorphism; linear code; BCH distance; $t$-Frobenius
negacyclic code.\\
\mbox{}\\
\noindent{\bf Mathematics Subject Classification:} 81P70; 94B15.
\end{abstract}

\section{Introduction}
\label{intro}
The class of quantum stabilizer codes is an important and well studied class of quantum
codes (\cite{s-srd}, \cite{cs-gqc}, \cite{g-qc}). Calderbank {\it et al.} in \cite{crss-gf4}
developed an elegant theory of constructing binary quantum stabilizer codes based on
classical codes and this class of quantum codes subsumed many previously known quantum codes
like those constructed in \cite{s-srd}, \cite{cs-gqc}, \cite{g-qc}, etc.
A $q$-ary version of these codes were first developed in \cite{ak-nqsc}.

To any quantum stabilizer code of length $n$, i.e., codes on the $p^n$-dimensional Hilbert
space $\h=\cc^p\otimes\cdots\otimes\cc^p$, there is an associated subspace of $\fpn\times\fpn$
and this subspace can be treated like a classical code. There is a symplectic inner product
on $\fpns$ such that for every quantum stabilizer code the associated subspace of $\fpns$
is totally isotropic (see, Definition~\ref{tis}). We introduce negacyclic quantum stabilizer codes
on the Hilbert space $\h=\cc^p\otimes\cdots\otimes\cc^p,\,p>2$ and show that for the negacyclic
stabilizer codes these subspaces of $\fpn\times\fpn$ turns out to be simultaneously negacyclic
(see, Definition~\ref{sns}). Recall that classical negacyclic codes are ideal of the
cyclotomic ring $\fp[X]/\ideal{X^n+1}$ which is a principal ideal ring. If the stabilizer
negacyclic code is linear, then the associated subspace of $\fpn\times\fpn$ is an ideal of
$\fps[X]/\ideal{X^n+1}$ and so is its dual with respect to the symplectic inner product.
We achieve a significant simplification
by restricting ourself to those negacyclic quantum stabilizer codes where the length $n$ divides
$p^t+1$ for some positive integer $t$, such that $\frac{p^t+1}{n}$ is an odd integer. We call
such codes as {\it $t$-Frobenius negacyclic codes} (cf. Definition~\ref{tfnc}).
The Frobenius automorphism over $\fp$ is used
to achieve several simplifications in characterizing such codes.

In Section~\ref{LtFNC}, we treat the case of linear stabilizer negacyclic codes and prove a certain necessary
condition for these codes in terms of the generator polynomials for the ideals, which are identified
with subspaces of $\fpn\times\fpn$ associated to the codes, of $\fps[X]/\ideal{X^n+1}$. We obtain a
factorization of any such generator polynomial. That this necessary condition is also sufficient for
a subspace of $\fpn\times\fpn$ to be the associated subspace of a quantum linear negacyclic code is
shown in Section~\ref{CLNlC}. We carry out investigation on a similar line for nonlinear negacyclic codes and
extended the notions of BCH distance to this case. With the help of a result from \cite{ks-pcc} we define
the BCH distance for any such stabilizer negacyclic code. The BCH distance is a lower bound for the
minimum distance of the stabilizer code. It is also established here that these codes cannot be
obtained using CSS construction. We list several new linear codes which are obtained by our construction.
Some new examples of nonlinear stabilizer codes are also listed.  

\section{Preliminaries}
\label{preli}
Let $p$ be a prime. Let $\h$ stand for the $p$
dimensional Hilbert space $\mathcal{L}^2(\fp)$, which is the space of all complex valued functions on $\fp$.
Then the set $\{\ket{a}:a\in\fp\}$, where $\ket{a}$ denote the characteristic function of the
singleton set $\{a\}$, forms an orthonormal basis of $\h$.
For $\ba=(a_0,a_1,\ldots,a_{n-1})^\T\in\fp^n$, let $\ket{\ba}$ denote the vector
$\ket{a_0}\otimes\ket{a_1}\otimes\cdots\otimes\ket{a_{n-1}}$ in $\htn$. The set $\{\ket{\ba}\,:
\,\ba\in\fp^n\}$ forms an orthonormal basis of $\htn$.

Let $\zeta$ be a primitive $p$-th root of unity in $\cc$. Let $\ba,\bb\in\fpn$. Define the unitary
operators $U_{\ba}$ and $V_{\bb}$ on $\htn$ by $U_{\ba}\ket{\bx}=\ket{\bx+\ba}$ and $V_{\bb}\ket{\bx}=
\zeta^{\bb^{\rm T}\bx}\ket{\bx}$ respectively. The operators of the form $U_{\ba}V_{\bb}$ are known as Weyl
operators. Two Weyl operators $U_{\ba}V_{\bb}$ and $U_{\bc}V_{\bd}$ commute \ifif
$\ba^{\rm T}\bd-\bb^{\rm T}\bc=0$.
For any two elements $\bu=(\ba,\bb)$ and $\bv=(\bc,\bd)$ in $\fpns$, the symplectic
inner product is defined by $\braket{\bu,\bv}_s=\ba^{\rm T}\bd-\bb^{\rm T}\bc$.
\begin{definition}\label{tis}
 A subset $S$ of $\fpns$ is called totally isotropic if for any two elements $\bu,\bv\in S$, the
 symplectic inner product $\braket{\bu,\bv}_s=0$.
\end{definition}
Thus for a subset $S$ of $\fpns$, the family $\{U_{\ba}V_{\bb}:(\ba,\bb)\in S\}$ of Weyl operators
is commutative \ifif the indexing set $S$ is totally isotropic. 
Let $\mathcal{W}_{n,p}$ denote the group generated by the Weyl operators on $\htn$. Then
the error group $\mathcal{E}_{n,p}$ is same as $\mathcal{W}_{n,p}$ if $p$ is odd, and is the group
generated by $\mathcal{W}_{n,p}\cup \iota\mathcal{W}_{n,p}$, $\iota=\sqrt{-1}$ in $\cc$,
when $p$ is $2$. A {\it stabilizer} code is a
subspace of $\htn$ which is invariant under some subgroup of the error group.
For a subgroup $\s$ of the error group $\mathcal{E}_{n,p}$, the stabilizer code is denoted by $\qc(\s)$
and defined by
\begin{equation}
 \qc(\s)=\{\ket{\psi}\in\htn:U\ket{\psi}=\ket{\psi}\;\forall\;U\in\s\}.
\end{equation}
It was shown by Calderbank {\it et al.} \cite{crss-og}, \cite{crss-gf4} and Gottesman \cite{g-qc} for the case when
$p=2$, and by Ashikhmin {\it et al.} \cite{ak-nqsc} and Arvind {\it et al.}
\cite{ap-wcr} for the $q$-ary case that $\qc(\s)$ is nontrivial
\ifif $\s$ is a commutative subgroup and does not contain $zI$ for any nontrivial $p$-th root of unity $z$.
We recall below a result from \cite{crss-og}, \cite{crss-gf4}, \cite{ap-wcr}
(see also \cite{dk-pt+1}) where it is shown
that such subgroups are characterized by totally isotropic subspaces of $\fpns$:

\begin{theorem}\label{cpt}
 Let $p$ be any prime, and $n$ be a positive integer. If $S$ is a totally isotropic subspace of
 $\fpns$, then the following holds:
 \begin{enumerate}[{\rm (1)}]
  \item the subset $\s=\{\omega^{\ba^{\T}\bb}U_{\ba}V_{\bb}:(\ba,\bb)\in S\}$ of unitary operators
  forms an abelian subgroup, where $\omega$ is $e^{\frac{2\pi\iota}{p}}$ when $p$ is odd and is
  $\iota=\sqrt{-1}$ when $p$ is even. Hence the invariant subspace $\qc(\s)$ of $\s$ forms a quantum
  stabilizer code.
  \item the projection operator onto the stabilizer code $\qc(\s)$ is given by $P=\sum_{U\in\s}U=$\\
  $\sum_{(\ba,\bb)\in S}\omega^{\ba^{\T}\bb}U_{\ba}V_{\bb}$.
 \end{enumerate}
\end{theorem}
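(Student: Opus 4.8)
The plan is to isolate a single multiplication rule for Weyl operators, read off all of Part~(1) as a formal consequence of it together with the totally isotropic hypothesis, and then verify directly the three defining properties of an orthogonal projector for Part~(2).

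\textbf{Step 1 (the Weyl multiplication law).} First I would compute, straight from the definitions $U_{\ba}\ket{\bx}=\ket{\bx+\ba}$ and $V_{\bb}\ket{\bx}=\zeta^{\bb^{\T}\bx}\ket{\bx}$, the basic relations $U_{\ba}U_{\bc}=U_{\ba+\bc}$, $V_{\bb}V_{\bd}=V_{\bb+\bd}$ and $V_{\bb}U_{\bc}=\zeta^{\bb^{\T}\bc}U_{\bc}V_{\bb}$. Composing these yields the one identity on which everything rests,
\[
(U_{\ba}V_{\bb})(U_{\bc}V_{\bd})=\zeta^{\bb^{\T}\bc}\,U_{\ba+\bc}V_{\bb+\bd},
\]
which I would confirm by evaluating both sides on an arbitrary basis vector $\ket{\bx}\in\htn$.

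\textbf{Step 2 (Part (1): $\s$ is an abelian group).} Writing $g_{(\ba,\bb)}=\omega^{\ba^{\T}\bb}U_{\ba}V_{\bb}$, I would multiply two elements of $\s$ using Step~1 and compare the resulting scalar with that of $g_{(\ba+\bc,\bb+\bd)}$. The two phases differ only by a power of $\omega$ (resp.\ $\zeta$) governed by $\ba^{\T}\bd$ and $\bb^{\T}\bc$; invoking the totally isotropic hypothesis $\braket{(\ba,\bb),(\bc,\bd)}_s=\ba^{\T}\bd-\bb^{\T}\bc=0$ forces them to agree, so $g_{(\ba,\bb)}g_{(\bc,\bd)}=g_{(\ba+\bc,\bb+\bd)}\in\s$, using that $S$ is a subspace and hence $(\ba+\bc,\bb+\bd)\in S$. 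Commutativity follows either from the symmetry of this product formula or directly from the commutation criterion recalled before Definition~\ref{tis}. The identity is $g_{(\ze,\ze)}=I$, and each $g_{(\ba,\bb)}$ is unitary with $g_{(\ba,\bb)}^{-1}=g_{(\ba,\bb)}^{\dagger}=g_{(-\ba,-\bb)}\in\s$, since $S$ is closed under negation. Thus $\s$ is an abelian subgroup of the unitary group, and $\qc(\s)=\{\ket{\psi}:U\ket{\psi}=\ket{\psi}\ \forall U\in\s\}$, being the simultaneous $+1$-eigenspace of a commuting family, is a linear subspace of $\htn$.

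\textbf{Step 3 (Part (2): the projector).} I would set $P=\frac{1}{|\s|}\sum_{U\in\s}U$ and verify the three projector properties. Self-adjointness is immediate from Step~2, since $U^{\dagger}=U^{-1}\in\s$ for each $U\in\s$, so reindexing gives $P^{\dagger}=P$. For idempotence I would use the absorption property $U_{0}P=P$ valid for every $U_{0}\in\s$ (left translation permutes the group $\s$), whence $P^{2}=\frac{1}{|\s|}\sum_{U_{0}}U_{0}P=\frac{1}{|\s|}\sum_{U_{0}}P=P$. For the range: if $\ket{\psi}\in\qc(\s)$ then every $U\ket{\psi}=\ket{\psi}$, so $P\ket{\psi}=\ket{\psi}$; conversely $U_{0}P\ket{\phi}=P\ket{\phi}$ for all $U_{0}\in\s$, so the range of $P$ lies in $\qc(\s)$. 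Hence $P$ is the orthogonal projector onto $\qc(\s)$, and the unnormalised operator $\sum_{U\in\s}U$ of the statement equals $|\s|\,P$.

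\textbf{The main obstacle.} The only substantive computation is the phase bookkeeping in Step~2: one must check that the normalising scalar $\omega^{\ba^{\T}\bb}$ is chosen precisely so that the cocycle $\zeta^{\bb^{\T}\bc}$ manufactured in Step~1 is cancelled on a totally isotropic subspace. For odd $p$ this is exactly where the relation $\ba^{\T}\bd=\bb^{\T}\bc$ together with the invertibility of $2$ modulo $p$ is used, so that the symmetric choice of phase absorbs the cocycle; for $p=2$, where $\omega=\iota$ and $\zeta=-1=\iota^{2}$, the analogous check requires lifting the $\fp$-valued exponents to integers and reconciling mod-$2$ with integer arithmetic, which is the classical Calderbank--Rains--Shor--Sloane calculation. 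I expect this verification, rather than the group-theoretic closure or the projector identities, to be the delicate step.
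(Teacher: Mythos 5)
The paper never proves Theorem~\ref{cpt}; it is recalled from \cite{crss-og}, \cite{crss-gf4}, \cite{ap-wcr}, so there is no internal proof to compare yours against, and your three-step outline is the standard reconstruction. For odd $p$ your plan is correct, but the phase cancellation in Step~2 should be stated as a \emph{constraint relating $\omega$ and $\zeta$}, not merely a verification. Writing $g_{(\ba,\bb)}=\omega^{\ba^{\T}\bb}U_{\ba}V_{\bb}$ and $\zeta=\omega^{s}$, your Step~1 gives $g_{(\ba,\bb)}g_{(\bc,\bd)}=\omega^{\ba^{\T}\bb+\bc^{\T}\bd+s\,\bb^{\T}\bc}\,U_{\ba+\bc}V_{\bb+\bd}$, whose phase differs from that of $g_{(\ba+\bc,\bb+\bd)}$ by $\omega^{(s-1)\bb^{\T}\bc-\ba^{\T}\bd}$; on a totally isotropic $S$ this residue is $\omega^{(s-2)\bb^{\T}\bc}$, which vanishes identically only when $s=2$ (take $\bu=\bv=(\ba,\ba)$ with $\ba^{\T}\ba\neq 0$: any one-dimensional span is isotropic, yet the residue survives unless $s=2$). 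So the theorem holds for odd $p$ only with the compatible normalization $\zeta=\omega^{2}$, i.e.\ $\omega=\zeta^{(p+1)/2}$ --- this is the precise content of your ``invertibility of $2$'' remark, and with it your closure, inverse ($g_{(\ba,\bb)}^{\dagger}=g_{(-\ba,-\bb)}$) and projector arguments all go through. You were also right to reinstate the factor $\frac{1}{|\s|}$ missing from the paper's display of $P$, with $|\s|=|S|$ because distinct Weyl operators are linearly independent.

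The genuine gap is the $p=2$ clause, which you defer to ``the classical Calderbank--Rains--Shor--Sloane calculation'': that calculation cannot rescue the statement, because for $p=2$ the set $\s$ as defined is not closed under multiplication, under \emph{any} interpretation of the exponent $\ba^{\T}\bb$. Take $n=2$, $\bu=((1,1),(0,0))$ and $\bv=((0,0),(1,1))$; their span is totally isotropic since $\ba^{\T}\bd=2\equiv 0$. Then $g_{\bu}g_{\bv}=U_{(1,1)}V_{(1,1)}$ with phase $+1$, while the formula assigns $\bu+\bv=((1,1),(1,1))$ the phase $\iota^{2}=-1$. Reading the exponent mod $2$ repairs this example but breaks the equally isotropic span of $((1,0),(1,0))$ and $((0,1),(0,1))$, where the product of the two generators is $\iota^{2}U_{(1,1)}V_{(1,1)}$. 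The same vector $((1,1),(1,1))$ thus requires opposite signs inside two different isotropic subspaces, so no phase function of $(\ba,\bb)$ alone can make the formula-defined set a group; the binary result in the cited sources is an existence statement, with signs chosen generator by generator, not a closed formula. Since the present paper invokes the theorem only for odd $p$ (negacyclicity forces $p>2$ throughout), this costs nothing downstream, but your Step~2 as planned would fail at $p=2$, and a complete writeup should either restrict to odd $p$ with $\zeta=\omega^{2}$ or replace the $p=2$ clause by the generated-subgroup formulation.
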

Let $S$ be a subspace of $\fpns$. The dual of $S$ \wrt the symplectic inner product
is denoted by $S^{\perp}$, i.e., 
\begin{equation*}
 S^{\perp}=\{\bu=(\ba,\bb)\in\fpns: \braket{\bu,\bv}_s=0\;\forall\;\bv=(\bc,\bd)\in S\}.
\end{equation*}
It can easily be seen that $S$ is totally isotropic \ifif $S\subseteq S^{\perp}$. For a totally
isotropic subspace $S$ of $\fpns$, the dimension of $S$ is at most $n$. It is immediate that
if the dimension of $S$ is $n-k$ for some integer $k\geqslant 0$, then the dimension of $S^{\perp}$ is
$n+k$.
\begin{definition}
 For an element $(\ba,\bb)\in\fpns$, the joint weight of $\bu$ is defined by
 \begin{equation*}
  {\rm wt}(\ba,\bb)=\#\{j:(a_j,b_j)\neq(0,0),0\leqslant j\leqslant n-1\},
 \end{equation*}
 where $\ba=(a_0,a_1,\ldots,a_{n-1})$ and $\bb=(b_0,b_1,\ldots,b_{n-1})$.
\end{definition}

\begin{theorem}
 Let $S$ be a totally isotropic subspace of $\fpns$ of dimension $n-k$ for some $k\geqslant 0$.
 Then the dimension of the associated stabilizer code $\qc(\s)$ is $p^k$. Furthermore, if the joint weight of
 any element in $S^{\perp}\ssm S$ is at least $d$, then $\qc(\s)$ can correct 
 $\left\lfloor\frac{d-1}{2}\right\rfloor$ errors.
\end{theorem}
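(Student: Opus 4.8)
The plan is to establish two separate assertions: first, the dimension formula $\dim \qc(\s) = p^k$, and second, the error-correction bound in terms of the joint weight on $S^{\perp}\setminus S$.

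For the dimension, I would start from the projection operator $P = \sum_{(\ba,\bb)\in S}\omega^{\ba^\T\bb}U_{\ba}V_{\bb}$ given in Theorem~\ref{cpt}(2), whose image is $\qc(\s)$. Since $P$ is a projection onto the stabilizer code, $\dim\qc(\s) = \mathrm{tr}(P)$. The key computation is that the Weyl operators $U_{\ba}V_{\bb}$ form a trace-orthogonal family: $\mathrm{tr}(U_{\ba}V_{\bb}) = 0$ unless $(\ba,\bb)=(\ze,\ze)$, in which case it equals $\dim\htn = p^n$. Thus only the identity term of the sum over $S$ survives the trace, giving $\mathrm{tr}(P) = \frac{p^n}{|\s|}$, where the normalization comes from $P$ being idempotent (so one should write $P = \frac{1}{|\s|}\sum_{U\in\s}U$ or rescale accordingly). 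Since $S$ is a subspace of dimension $n-k$ over $\fp$, we have $|\s| = |S| = p^{n-k}$, and hence $\dim\qc(\s) = p^{n}/p^{n-k} = p^{k}$. I expect the main bookkeeping subtlety here to be getting the phase factors $\omega^{\ba^\T\bb}$ and the normalization constant exactly right so that $P$ is genuinely idempotent; this should follow from $S$ being totally isotropic, which ensures the operators commute and the group law is consistent with the phases.

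For the distance bound, the approach is the standard stabilizer-code error-detection argument. An error operator is itself a Weyl operator $E = U_{\bc}V_{\bd}$ indexed by $(\bc,\bd)\in\fpns$, whose joint weight measures how many of the $n$ qudits it acts on nontrivially. The decisive dichotomy is: $E$ either commutes or anticommutes (up to phase) with every stabilizer element, governed by the symplectic inner product $\braket{(\bc,\bd),(\ba,\bb)}_s$. An error $E$ is \emph{detectable} unless $(\bc,\bd)\in S^{\perp}$ but $(\bc,\bd)\notin S$; equivalently, errors that commute with all of $\s$ yet lie outside $\s$ are the undetectable ones, and these correspond exactly to elements of $S^{\perp}\setminus S$. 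The hypothesis that every element of $S^{\perp}\setminus S$ has joint weight at least $d$ therefore says the code detects all errors of weight up to $d-1$.

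Finally, I would convert detection into correction via the standard Knill--Laflamme criterion: a code correcting all errors of weight $\leqslant t$ is equivalent to detecting all errors of weight $\leqslant 2t$, obtained by noting that the product of two correctable Weyl operators again has joint weight at most $2t$ (since joint weight is subadditive under the group multiplication). Taking $2t \leqslant d-1$ gives $t = \left\lfloor\frac{d-1}{2}\right\rfloor$, the claimed bound. The main obstacle I anticipate is not conceptual but one of careful accounting: verifying that the minimum joint weight of $S^{\perp}\setminus S$ is the correct quantity — in particular that elements of $S$ itself act trivially on the code (so they must be excluded from the weight count) while elements outside $S^{\perp}$ are automatically detectable, so the relevant ``bad'' set is precisely the coset-difference $S^{\perp}\setminus S$.
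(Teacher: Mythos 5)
The paper itself contains no proof of this theorem: it appears in Section~\ref{preli} as a recalled result, with the attribution carried by the references cited just above Theorem~\ref{cpt} (\cite{crss-og}, \cite{crss-gf4}, \cite{ak-nqsc}, \cite{ap-wcr}), so there is no in-paper argument to compare yours against. On its own merits, your proposal is the standard proof from that literature and is essentially sound. In the dimension half, your normalization remark is exactly right --- the operator displayed in Theorem~\ref{cpt}(2) is idempotent only after division by $|\s|$ --- and the trace-orthogonality computation $\mathrm{tr}(U_{\ba}V_{\bb})=p^{n}\,\delta_{(\ba,\bb),(\ze,\ze)}$ does the work. Two small observations should be added to make $\dim\qc(\s)=\mathrm{tr}(P)=p^{n}/|\s|=p^{k}$ legitimate: the map $(\ba,\bb)\mapsto\omega^{\ba^{\T}\bb}U_{\ba}V_{\bb}$ is injective (distinct Weyl operators are linearly independent, again by trace-orthogonality), so $|\s|=|S|=p^{n-k}$; and the pinned phases prevent $\s$ from containing $zI$ for any $z\neq 1$, so the identity is the only term surviving the trace and $P\neq 0$.

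In the distance half, one step needs patching rather than mere bookkeeping: a physical error acting on at most $t$ qudits is not itself a Weyl operator, as you assert, but a linear combination of Weyl operators of joint weight at most $t$, since these span the matrix algebra on the affected tensor factors. Because the Knill--Laflamme condition $PE_{1}^{\dagger}E_{2}P=c\,P$ is linear in $E_{1}^{\dagger}E_{2}$, it suffices to verify it on this Weyl basis, and that is precisely where your dichotomy enters. Note also that for odd $p$ the relation between Weyl operators is commutation up to the phase $\zeta^{\braket{\bu,\bv}_{s}}$ rather than literal anticommutation; your argument needs only that this phase is nontrivial, which for $(\bc,\bd)\notin S^{\perp}$ yields $PEP=0$, while for $(\bc,\bd)\in S$ the error acts on the code as a scalar. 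With subadditivity of joint weight (the support of $E_{1}^{\dagger}E_{2}$ lies in the union of the supports of $E_{1}$ and $E_{2}$), the condition $2t\leqslant d-1$ places the index of every product $E_{1}^{\dagger}E_{2}$ either in $S$ or outside $S^{\perp}$, giving correction of $\left\lfloor\frac{d-1}{2}\right\rfloor$ errors as claimed.
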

The stabilizer code $\qc(\s)$ of length $n$ and of dimension $p^k$ which
has minimum distance $d$ is denoted by $[[n,k,d]]_p$.

\section{Quantum Negacyclic Codes}
\label{QNCod}
Let $p$ be an odd prime and $n$ be a positive integer, such that
$\gcd(n,p)=1$. Let $N:\fpn\longrightarrow\fpn$ be the linear map
defined by $\bu=(u_0,u_1,\dots,u_{n-1})\longmapsto (-u_{n-1},u_0,\ldots,u_{n-2})$. Define
the operator $\nn$ on $\htn$ by $\nn\ket{\ba}=\ket{N\ba}$. Clearly $\nn$ is a unitary
operator. Now onwards $p$ will always denote an odd prime in this article.
\begin{definition}
 A quantum code $\qc$ is said to be negacyclic if it is invariant under
 the operator $\nn$.
\end{definition}

\begin{proposition}\label{pcn}
 A quantum code $\qc$ is negacyclic if and only if the projection operator onto $\qc$ commutes with $\nn$.
\end{proposition}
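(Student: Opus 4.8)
The plan is to reduce everything to the behaviour of the orthogonal projection $P$ onto $\qc$ together with the unitarity of $\nn$, and to prove the two implications separately. A preliminary observation I would record first is that, since $\htn$ is finite-dimensional and $\nn$ is invertible, the defining inclusion $\nn\qc\subseteq\qc$ for a negacyclic code is in fact an equality $\nn\qc=\qc$; applying $\nn^{-1}$ then also gives $\nn^{-1}\qc=\qc$. This small upgrade is what makes the harder direction work cleanly.

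For the direction ``$P$ commutes with $\nn$ implies $\qc$ negacyclic'', suppose $\nn P=P\nn$. For any $\ket{\psi}\in\qc$ we have $P\ket{\psi}=\ket{\psi}$, so $\nn\ket{\psi}=\nn P\ket{\psi}=P\nn\ket{\psi}$ lies in the range of $P$, which is $\qc$. Hence $\nn\qc\subseteq\qc$ and $\qc$ is negacyclic. This half is immediate.

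The substance lies in the converse. Assuming $\qc$ negacyclic, hence $\nn\qc=\qc=\nn^{-1}\qc$, the key step is to show that $\qc^{\perp}$ is also $\nn$-invariant. For $\ket{\phi}\in\qc^{\perp}$ and $\ket{\psi}\in\qc$, unitarity of $\nn$ gives $\braket{\nn\phi|\psi}=\braket{\phi|\nn^{-1}\psi}$, and since $\nn^{-1}\ket{\psi}\in\qc$ the right-hand side vanishes; thus $\nn\ket{\phi}\in\qc^{\perp}$ and $\nn\qc^{\perp}\subseteq\qc^{\perp}$. Now write $\htn=\qc\oplus\qc^{\perp}$ and decompose an arbitrary vector as $\ket{v}=\ket{v_1}+\ket{v_2}$ with $\ket{v_1}\in\qc$ and $\ket{v_2}\in\qc^{\perp}$. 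Using $\nn\ket{v_1}\in\qc$ and $\nn\ket{v_2}\in\qc^{\perp}$ one computes $P\nn\ket{v}=\nn\ket{v_1}=\nn P\ket{v}$, so that $P\nn=\nn P$.

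I expect the only genuine obstacle to be the invariance of $\qc^{\perp}$, and it is exactly there that unitarity is indispensable: it is what converts the adjoint into $\nn^{-1}$ in the inner-product computation, and it is also what promotes the inclusion $\nn\qc\subseteq\qc$ to the equality needed to place $\nn^{-1}\ket{\psi}$ back in $\qc$. Everything else is routine orthogonal-decomposition bookkeeping.
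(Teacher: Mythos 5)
Your proof is correct and follows essentially the same route as the paper's: the easy direction via the range of $P$, and the converse via $\nn$-invariance of $\qc^{\perp}$ (from unitarity) together with the orthogonal decomposition $\htn=\qc\oplus\qc^{\perp}$. You additionally spell out two details the paper leaves implicit --- upgrading $\nn\qc\subseteq\qc$ to equality by finite-dimensionality, and the inner-product computation showing $\nn\qc^{\perp}\subseteq\qc^{\perp}$ --- which is careful but not a different argument.
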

\begin{proof}
 Let $\g$ denote the Hilbert space $\htn$ and let $P$ be the projection operator onto $\qc$. If $P$
 commutes with $\nn$, then $\nn\qc=\nn P\g=P\nn\g$. Since $\nn$ is a unitary, $\nn\g=\g$. Thus,
 $\nn\qc=P\g=\qc$.
 
 Conversely, assume that $\nn\qc=\qc$. Then $\nn\qc^{\perp}=\qc^{\perp}$, because unitary preserves inner product.
 Let $\ket{\psi}\in\g$. Clearly $\ket{\psi}$ can be written uniquely as $\ket{\psi}=\ket{\psi_1}+\ket{\psi_2}$,
 where $\ket{\psi_1}\in\qc$ and $\ket{\psi_2}\in\qc^{\perp}$. Therefore, $\nn P\ket{\psi}=\nn\ket{\psi_1}=P
 (\nn\ket{\psi_1}+\nn\ket{\psi_2})=P\nn\ket{\psi}$. Hence, $\nn P=P\nn$. This completes the proof.
  
\end{proof}

\begin{definition}\label{sns}
 A subspace $S$ of $\fp^n\times\fp^n=\fp^{2n}$ is said to be simultaneously negacyclic if for
 any $(\ba,\bb)$ in $S$, $(N\ba,N\bb)$ is also in $S$.
\end{definition}

\begin{proposition}\label{ssp}
 If $S$ is a simultaneously negacyclic subspace of $\fpns$, then the dual $S^{\perp}$ of $S$ \wrt
 the symplectic inner product is also simultaneously negacyclic subspace of $\fpns$.
\end{proposition}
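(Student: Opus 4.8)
The plan is to realize the negacyclic shift acting on both coordinates as a single invertible linear map $\widetilde{N}\colon\fpns\to\fpns$, $(\ba,\bb)\mapsto(N\ba,N\bb)$, to show that it is an isometry for the symplectic form, and to note that it carries $S$ onto itself; the simultaneous negacyclicity of $S^{\perp}$ will then drop out. The key observation — and essentially the only nontrivial point — is that $N$ is an \emph{orthogonal} matrix over $\fp$. Indeed, by its very definition $N$ sends the standard basis $e_0,e_1,\dots,e_{n-1}$ to $e_1,e_2,\dots,e_{n-1},-e_0$, so $N$ is a signed permutation matrix; its columns are again the standard basis up to sign, whence $N^{\T}N=I$.

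Granting $N^{\T}N=I$, I would first check that $\widetilde{N}$ preserves the symplectic inner product: for all $(\ba,\bb),(\bc,\bd)\in\fpns$,
\[
\braket{(N\ba,N\bb),(N\bc,N\bd)}_s=\ba^{\T}N^{\T}N\bd-\bb^{\T}N^{\T}N\bc=\ba^{\T}\bd-\bb^{\T}\bc=\braket{(\ba,\bb),(\bc,\bd)}_s .
\]
Next, since $N$ is invertible (with $N^{-1}=N^{\T}$), so is $\widetilde{N}$. The hypothesis that $S$ is simultaneously negacyclic is exactly the statement $\widetilde{N}(S)\subseteq S$; because $\widetilde{N}$ is injective and $S$ is finite dimensional, a dimension count upgrades this inclusion to the equality $\widetilde{N}(S)=S$.

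Finally I would combine the two facts. Let $(\ba,\bb)\in S^{\perp}$ and let $(\bc,\bd)\in S$ be arbitrary. Because $\widetilde{N}(S)=S$, we may write $(\bc,\bd)=(N\bc',N\bd')$ with $(\bc',\bd')\in S$, and then by the isometry property
\[
\braket{(N\ba,N\bb),(\bc,\bd)}_s=\braket{(N\ba,N\bb),(N\bc',N\bd')}_s=\braket{(\ba,\bb),(\bc',\bd')}_s=0,
\]
the last equality holding since $(\ba,\bb)\in S^{\perp}$ and $(\bc',\bd')\in S$. As $(\bc,\bd)\in S$ was arbitrary, $(N\ba,N\bb)\in S^{\perp}$, i.e. $S^{\perp}$ is simultaneously negacyclic. (Equivalently, an invertible symplectic isometry sends $S^{\perp}$ onto $\widetilde{N}(S)^{\perp}=S^{\perp}$.) The whole argument is routine once the orthogonality $N^{\T}N=I$ is secured, which is the one step I would state with care.
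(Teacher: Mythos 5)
Your proof is correct, but it takes a genuinely different route from the paper's, so a comparison is worthwhile. Both arguments pivot on the same crucial fact---that $N$ is a signed permutation matrix, so $N^{\T}=N^{-1}$---but they deploy it differently. The paper never establishes $\widetilde{N}(S)=S$; instead it moves $N$ across the symplectic form via the adjoint identity $\braket{(N\ba,N\bb),(\bx,\by)}_s=\braket{(\ba,\bb),(N^{\T}\bx,N^{\T}\by)}_s$, and then observes that $N^{2n}=I_n$ forces $N^{\T}=N^{2n-1}$, a \emph{positive} power of $N$; since the hypothesis that $S$ is simultaneously negacyclic automatically gives closure of $S$ under all positive powers of the simultaneous shift, the pair $(N^{\T}\bx,N^{\T}\by)$ lies in $S$ and the inner product vanishes immediately. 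You instead verify that $\widetilde{N}$ is a symplectic isometry (using $N^{\T}N=I$) and upgrade the invariance $\widetilde{N}(S)\subseteq S$ to the equality $\widetilde{N}(S)=S$ by injectivity plus finite-dimensionality, so that every element of $S$ can be pulled back through $\widetilde{N}$; all three of your steps (orthogonality, the isometry computation, the surjectivity count) are sound. The trade-off: the paper's version is shorter and sidesteps any surjectivity argument, but it exploits the finite order of $N$, which is special to this shift; your version is marginally longer but more portable, since it really proves the general statement you note parenthetically---any invertible symplectic isometry carrying a subspace onto itself also carries its symplectic dual onto itself---with no reference to the order of the map.
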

\begin{proof}
 Note that, $S^{\perp}$ is a subspace of $\fpns$.
 It is easy to check that, $N^{2n}=I_n$ and $N^{\T}=N^{-1}$. Therefore, $N^{\T}=N^{2n-1}$. Let
 $(\ba,\bb)\in S^{\perp}$ and let $(\bx,\by)\in S$. Then
 $\braket{(N\ba,N\bb),(\bx,\by)}_s=\braket{(\ba,\bb),(N^{\T}\bx,N^{\T}\by)}_s=0$, since
 $(N^{\T}\bx,N^{\T}\by)=(N^{2n-1}\bx,N^{2n-1}\by)\in S$. Hence $(N\ba,N\bb)\in S^{\perp}$.
 This completes the proof.
  
\end{proof}

\begin{proposition}
 Let $S$ be a totally isotropic subspace of $\fpns$. The stabilizer
 code $\qc(\s)$ is negacyclic \ifif $S$ is simultaneously negacyclic.
\end{proposition}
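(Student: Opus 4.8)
The plan is to invoke Proposition~\ref{pcn}, which reduces the statement to showing that the projection operator $P$ onto $\qc(\s)$ commutes with $\nn$ \ifif $S$ is simultaneously negacyclic. By Theorem~\ref{cpt} we have $P=\sum_{(\ba,\bb)\in S}\omega^{\ba^{\T}\bb}U_{\ba}V_{\bb}$, and since $\nn$ is unitary the condition $\nn P=P\nn$ is equivalent to $\nn P\nn^{-1}=P$. Thus the whole problem becomes a matter of conjugating the weighted Weyl operators by $\nn$ and comparing the two resulting sums.

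First I would compute the conjugates of the basic Weyl operators. Using $\nn\ket{\bx}=\ket{N\bx}$ and the fact that $N$ is invertible (so that $\nn^{-1}\ket{\bx}=\ket{N^{-1}\bx}$), a direct evaluation on basis vectors gives $\nn U_{\ba}\nn^{-1}\ket{\bx}=\ket{\bx+N\ba}$, that is, $\nn U_{\ba}\nn^{-1}=U_{N\ba}$. For $V_{\bb}$ one finds $\nn V_{\bb}\nn^{-1}\ket{\bx}=\zeta^{\bb^{\T}N^{-1}\bx}\ket{\bx}$; rewriting $\bb^{\T}N^{-1}\bx=\big((N^{-1})^{\T}\bb\big)^{\T}\bx$ and using the identity $N^{\T}=N^{-1}$ established in Proposition~\ref{ssp} (whence $(N^{-1})^{\T}=N$) yields $\nn V_{\bb}\nn^{-1}=V_{N\bb}$. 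Combining the two, $\nn U_{\ba}V_{\bb}\nn^{-1}=U_{N\ba}V_{N\bb}$.

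Next I would track the phase factor. The key identity is $(N\ba)^{\T}(N\bb)=\ba^{\T}N^{\T}N\bb=\ba^{\T}\bb$, again by $N^{\T}=N^{-1}$. Hence conjugating a single weighted term gives $\nn\big(\omega^{\ba^{\T}\bb}U_{\ba}V_{\bb}\big)\nn^{-1}=\omega^{(N\ba)^{\T}(N\bb)}U_{N\ba}V_{N\bb}$, which is precisely the term indexed by $(N\ba,N\bb)$ in a sum of the form appearing in Theorem~\ref{cpt}. Summing over $S$ and reindexing by $(\ba',\bb')=(N\ba,N\bb)$ shows that $\nn P\nn^{-1}=\sum_{(\ba',\bb')\in NS}\omega^{\ba'^{\T}\bb'}U_{\ba'}V_{\bb'}$, where $NS=\{(N\ba,N\bb):(\ba,\bb)\in S\}$; this is exactly the projection-type sum attached to the set $NS$.

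Finally, I would conclude by comparing index sets. The Weyl operators $\{U_{\ba}V_{\bb}:(\ba,\bb)\in\fpns\}$ are linearly independent as operators on $\htn$, and the coefficients $\omega^{\ba^{\T}\bb}$ are nonzero, so the operator equality $\nn P\nn^{-1}=P$ forces the two index sets to coincide, i.e. $NS=S$. Since $N$ is a bijection and $S$ is finite, $NS=S$ is equivalent to the single inclusion $NS\subseteq S$, which states that $(N\ba,N\bb)\in S$ whenever $(\ba,\bb)\in S$ — the definition of simultaneous negacyclicity. The steps I expect to require the most care are the bookkeeping of $N$ versus $N^{-1}$ in the conjugation of $V_{\bb}$ (where the transpose intervenes and the orthogonality relation $N^{\T}=N^{-1}$ is essential), and the appeal to linear independence of the Weyl operators that lets one pass from the operator identity back to set equality; the remaining verifications are routine.
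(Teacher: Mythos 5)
Your proof is correct and follows essentially the same route as the paper: reduce via Proposition~\ref{pcn} to the commutation $\nn P=P\nn$, conjugate the projection formula $P=\sum_{(\ba,\bb)\in S}\omega^{\ba^{\T}\bb}U_{\ba}V_{\bb}$ term by term using $\nn U_{\ba}V_{\bb}\nn^{\dagger}=U_{N\ba}V_{N\bb}$ and the phase identity $\ba^{\T}N^{\T}N\bb=\ba^{\T}\bb$. The only difference is that you make explicit two points the paper leaves implicit --- the linear independence of the Weyl operators, which justifies passing from the operator identity $\nn P\nn^{-1}=P$ to the set equality $NS=S$, and the finiteness argument showing $NS\subseteq S$ suffices --- which is a welcome tightening rather than a new approach.
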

\begin{proof}
 Let $P$ be the projection operator onto $\qc(\s)$. As observed in Proposition~\ref{pcn}
 $\qc(\s)$ is negacyclic \ifif $\nn P\nn^{\dagger}=P$.
 From Theorem~\ref{cpt} we have, $P=\sum_{(\ba,\bb)\in S}\omega^{\ba^{\T}\bb}U_{\ba}V_{\bb}$.
 
 Assume that $\qc(\s)$ is negacyclic. It follows that $\nn U_{\ba}V_{\bb}\nn^{\dagger}
 =U_{N\ba}V_{N\bb}$. Thus $P=\nn P\nn^{\dagger}$ gives,
 \begin{equation*}
  P=\sum_{(\ba,\bb)\in S}\omega^{\ba^{\T}\bb}U_{N\ba}V_{N\bb}.
 \end{equation*}
 Therefore, it is necessary that for $(\ba,\bb)\in S$, $(N\ba,N\bb)$ must belongs to $S$ and hence
 $S$ in simultaneously negacyclic.
 
 Conversely, assume that $S$ is simultaneously negacyclic. Then for any $(\ba,\bb)\in S$, $(N\ba,N\bb)$ is
 also in $S$. It can easily be checked that $N^{-1}=N^{\T}$. So, $\ba^{\T}N^{\T}N\bb=\ba^{\T}\bb$.
 Combining this with the expression of $P$, we obtain $P=\nn P\nn^{\dagger}$. Hence $\qc(\s)$ is negacyclic.
  
\end{proof}

In view of this proposition, one needs to study simultaneously negacyclic subspaces of $\fpns$ to
investigate the theory of quantum stabilizer negacyclic codes. To characterize such subspaces, we first fix some
notations: Let $\rn$ denote the ring $\fp[X]/\ideal{X^n+1}$.
We identify a vector $\ba=(a_0,a_1,\dots,a_{n-1})\in\fpn$ with the polynomial
$a(X)=a_0+a_1X+\cdots+a_{n-1}X^{n-1}\in\rn$. The ring $\rn$ is isomorphic to $\fpn$ as vector
spaces over $\fp$ under this identification. For any vector $\ba\in\fpn$ we use the plain
face letter $a$ to denote the corresponding element in $\rn$. For any subset $S$ of $\fpns$ the image
in $\rn\times\rn$ will also be denoted by $S$. Note that, classical negacyclic codes
correspond to ideals of $\rn$. Since $\rn$ is a principal ideal ring,
any ideal of $\rn$ is generated by the monic polynomial of lowest degree.
It is immediate, that the generator divides $X^n+1$.

Here we obtain a criterion in terms of the polynomials for a given simultaneously negacyclic subspace $S$
of $\fpns$ to be totally isotropic. The multiplicative inverse of $X$ in $\rn$ is
$-X^{n-1}$ and is denoted by $X^{-1}$.
\begin{proposition}
 Let $S$ be a simultaneously negacyclic subspace of $\fpns$. Then $S$ is totally isotropic \ifif for any
 two elements $\bu=(\ba,\bb)$ and $\bv=(\bc,\bd)$ in $S$, the corresponding polynomials satisfy
 \begin{equation*}
  a(X)d(X^{-1})-b(X)c(X^{-1})=0 \mxn.
 \end{equation*}
\end{proposition}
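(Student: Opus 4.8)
The plan is to express each coefficient of the polynomial $a(X)d(X^{-1})-b(X)c(X^{-1})$ in $\rn$ as a single symplectic inner product of $\bu=(\ba,\bb)$ with a negacyclic shift of $\bv=(\bc,\bd)$, and then read off both implications. The starting point is the dictionary between $\rn$ and the shift $N$: since $X\cdot a(X)=-a_{n-1}+a_0X+\cdots+a_{n-2}X^{n-1}$ in $\rn$, multiplication by $X$ on polynomials corresponds precisely to applying $N$ to coefficient vectors, and hence multiplication by $X^{-m}$ corresponds to $N^{-m}$.

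The main computation I would carry out is the coefficient of $X^m$ in $a(X)d(X^{-1})\mxn$. Writing $a(X)d(X^{-1})=\sum_{i,j}a_id_jX^{i-j}$ and reducing using $X^{-j}=-X^{n-j}$ for $1\le j\le n-1$ and $X^0=1$, the constant term collapses to the diagonal $i=j$ and equals $\ba^{\T}\bd$. For a general $m$, the coefficient of $X^m$ is the constant term of $X^{-m}a(X)d(X^{-1})$; by the dictionary $X^{-m}a(X)$ is the polynomial of $N^{-m}\ba$, so this constant term equals $(N^{-m}\ba)^{\T}\bd=\ba^{\T}N^m\bd$, where I use $N^{\T}=N^{-1}$. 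The same applied to $b(X)c(X^{-1})$ gives coefficient $\bb^{\T}N^m\bc$, so the coefficient of $X^m$ in the difference is exactly
\begin{equation*}
\ba^{\T}N^m\bd-\bb^{\T}N^m\bc=\braket{(\ba,\bb),(N^m\bc,N^m\bd)}_s.
\end{equation*}

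Both directions then fall out. If $S$ is totally isotropic, then since $S$ is simultaneously negacyclic each $(N^m\bc,N^m\bd)$ lies in $S$, so every symplectic product above vanishes; thus all $n$ coefficients of $a(X)d(X^{-1})-b(X)c(X^{-1})$ are zero and the polynomial is $0$ in $\rn$. Conversely, if the polynomial vanishes for every pair in $S$, then in particular its constant term (the case $m=0$) gives $\ba^{\T}\bd-\bb^{\T}\bc=\braket{\bu,\bv}_s=0$, so $S$ is totally isotropic. The main obstacle I anticipate is purely the sign bookkeeping in the reduction modulo $X^n+1$: the relation $X^n=-1$ introduces a minus sign in $X^{-j}=-X^{n-j}$, and one must track it carefully to confirm that it is the negacyclic shift $N$, rather than an ordinary cyclic shift, that appears in the displayed identity. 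Once that identity is verified, the logical equivalence is immediate.
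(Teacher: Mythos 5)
Your proof is correct and takes essentially the same approach as the paper: both hinge on the identity that the coefficient of $X^m$ in $a(X)d(X^{-1}) \mxn$ is $\ba^{\T}N^m\bd$ (likewise $\bb^{\T}N^m\bc$ for the other product), both use simultaneous negacyclicity to place $(N^m\bc,N^m\bd)$ in $S$ for the forward direction, and both obtain the converse by reading off the constant term. The only difference is that you actually verify the coefficient identity, with the sign bookkeeping via $X^{-m}\leftrightarrow N^{-m}$ and $N^{\T}=N^{-1}$, which the paper merely asserts.
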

\begin{proof}
 The coefficient of $X^k$ in $a(X)d(X^{-1}) \mxn$ is $\ba^{\T}N^k\bd$, and in
 $b(X)c(X^{-1}) \mod $ $X^n+1$ is $\bb^{\T}N^k\bc$. If $S$ is simultaneously negacyclic and totally isotropic,
 then $\ba^{\T}N^k\bd-\bb^{\T}N^k\bc=0$ for all $k$. Thus $a(X)d(X^{-1})-b(X)c(X^{-1})=0 \mxn$.
 
 Conversely, observing the constant term of the polynomial equation, we conclude that $S$ is totally isotropic.
  
\end{proof}

Let $S$ be a simultaneously negacyclic subspace of $\fpns$. Let $F$ denote the projection of $S$ onto
the first $n$ components, i.e., $F=\{\ba:(\ba,\bb)\in S\}$. We infer that $F$ is a negacyclic subspace of $\fpn$.

\begin{definition}
Let $S$ be a simultaneously negacyclic subspace of $\fpns$. Let $F$ denote the projection of $S$ onto
the first $n$ components. Let $g(X)$
be the generator of the ideal corresponding to $F$ in $\rn$.
If there is a unique polynomial $f$
 in $\rn$ such that $(\boldsymbol{g},\boldsymbol{f})\in S$, then we say that $S$ is {\it uniquely negacyclic}.
This pair of polynomials is called the {\it generating pair} for $S$.
\end{definition}

\begin{proposition}\label{gpfuns}
 Let $S$ be a simultaneously negacyclic subspace of $\fpns$. Then $S$ is uniquely negacyclic \ifif for any
 element $(\ze,\bb)$ in $S$, $\bb=\ze$. Further, if $S$ is uniquely negacyclic with the generating pair $(g,f)$,
 then $S=\{(ag,af):a\in\rn\}$.
\end{proposition}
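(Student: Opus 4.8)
The plan is to exploit the fact that simultaneous negacyclicity is precisely the statement that $S$ is a submodule of $\rn\times\rn$ over the ring $\rn$, with $\rn$ acting diagonally. Indeed, under the identification of $\fpn$ with $\rn$ the map $N$ becomes multiplication by $X$; since $S$ is an $\fp$-subspace closed under $(\ba,\bb)\mapsto(N\ba,N\bb)$, it is closed under multiplication by every power $X^i$ and hence, by $\fp$-linearity, under multiplication by an arbitrary $a\in\rn$. I would record this observation at the outset, as it is the backbone of both assertions.

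For the equivalence I would argue both directions directly. Suppose first that $S$ is uniquely negacyclic with generating pair $(g,f)$, and let $(\ze,\bb)\in S$ correspond to a polynomial $b$. Then $(g,f)+(\ze,\bb)=(g,f+b)\in S$ by $\fp$-linearity, so both $(g,f)$ and $(g,f+b)$ lie in $S$; uniqueness of the second coordinate attached to $g$ forces $f+b=f$, i.e. $b=0$ and $\bb=\ze$. Conversely, assume every $(\ze,\bb)\in S$ has $\bb=\ze$. Since $g\in F$ there is at least one $f$ with $(g,f)\in S$; and if $(g,f_1),(g,f_2)\in S$, then their difference $(\ze,f_1-f_2)$ lies in $S$, whence $f_1-f_2=0$ by hypothesis. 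Thus $f$ is unique and $S$ is uniquely negacyclic.

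For the description $S=\{(ag,af):a\in\rn\}$, write $T$ for the right-hand side. The inclusion $T\subseteq S$ is immediate from the backbone observation: $(g,f)\in S$ forces $a\,(g,f)=(ag,af)\in S$ for every $a\in\rn$. For $S\subseteq T$, take $(u,v)\in S$. Its first coordinate lies in the ideal $F=\ideal{g}$, so $u=ag$ for some $a\in\rn$; subtracting, $(u,v)-a\,(g,f)=(\ze,v-af)\in S$, and the implication just established, namely $(\ze,\bb)\in S\Rightarrow\bb=\ze$, gives $v=af$. Hence $(u,v)=(ag,af)\in T$, completing the proof.

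There is no serious obstacle here; the argument is a routine module-theoretic manipulation once the diagonal $\rn$-module structure is recognized. The only point demanding a little care is that $\rn$ is not an integral domain, so the element $a$ with $u=ag$ need not be unique — but this is harmless, since any choice of $a$ yields the representation $(u,v)=(ag,af)$, which is all that is required.
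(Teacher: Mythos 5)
Your proof is correct and takes essentially the same route as the paper's: both directions of the equivalence rest on the identical difference argument (if $(g,f_1),(g,f_2)\in S$ then $(\ze,f_1-f_2)\in S$), and the description $S=\{(ag,af):a\in\rn\}$ is obtained exactly as in the paper by noting closure under multiplication by powers of $X$ (hence by all of $\rn$) and subtracting $a\,(g,f)$ from an arbitrary element of $S$. Your upfront packaging of simultaneous negacyclicity as a diagonal $\rn$-module structure is merely a cosmetic reorganization of the paper's inline observation that $(X^kg,X^kf)\in S$ for all $k$.
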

\begin{proof}
 Assume that $S$ is uniquely negacyclic. Let $(g,f)$ be the generating pair for $S$. If $(\ze,\bb)\in S$
 for some $\bb\neq\ze$, then $(\boldsymbol{g},\boldsymbol{f}+\bb)$ belongs to $S$ and which is a contradiction.
 
 Conversely, assume that for any $(\ze,\bb)\in S$, $\bb=\ze$ holds. Let $g(X)$ be the generator of
 the corresponding ideal of
 $A=\{\ba:(\ba,\bb)\in S\}$. Suppose, $f_1(X)$ and $f_2(X)$ in $\rn$ be such that $(\boldsymbol{g},\boldsymbol{f_1})$
 and $(\boldsymbol{g},\boldsymbol{f_2})$ both are in $S$. Then $(\ze,\boldsymbol{f_1}-\boldsymbol{f_2})$ is in $S$.
 Hence, $\boldsymbol{f_1}-\boldsymbol{f_2}=\ze$. Therefore $f_1(X)=f_2(X)$.
 
 For the second part, observe that $(X^kg,X^kf)$ is in $S\subset\rn\times\rn$ for all $k$,
 because $S$ is simultaneously
 negacyclic. Thus $(ag,af)\in S$ for any polynomial $a(X)$ in $\rn$.
 Let $(c,d)$ be any element in $S$. There exists some $a(X)$ in $\rn$ such that $c=ag$ and $(ag,af)\in S$. Thus
 $(0,d-af)$ belongs to $S$. Therefore, $d=af$. This completes the proof.
  
\end{proof}

The totally isotropic subspace $S$ associated to a CSS code is of the form $C_1\times C_2$ where $C_1$ and $C_2$ are
two classical codes over $\fp$ of same length. So the elements of the form $(\ba,\ze)$, $\ba\in C_1$ and $(\ze,\bb)$,
$\bb\in C_2$ are all in $S$. These observations lead us to the following proposition:
\begin{proposition}
 Suppose $S$ is the totally isotropic subspace of a stabilizer code $\qc(\s)$. If $S$ is uniquely negacyclic, then
 $\qc(\s)$ is not CSS unless it is of distance $1$.
\end{proposition}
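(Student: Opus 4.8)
The plan is to use Proposition~\ref{gpfuns} to collapse the second classical code in the CSS decomposition to zero, and then to read a joint weight $1$ vector directly off $S^{\perp}$. So I would assume, for the nontrivial implication, that $S$ is uniquely negacyclic and that $\qc(\s)$ is CSS. By the discussion preceding the statement, being CSS means $S=C_1\times C_2$ for classical codes $C_1,C_2\subseteq\fpn$, and in particular $(\ze,\bb)\in S$ for every $\bb\in C_2$. Proposition~\ref{gpfuns} says that unique negacyclicity is exactly the condition that any $(\ze,\bb)\in S$ forces $\bb=\ze$, so $C_2=\{\ze\}$ and hence $S=C_1\times\{\ze\}$.

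Next I would compute the symplectic dual explicitly. For an arbitrary $(\bc,\bd)\in\fpns$ and an element $(\ba,\ze)\in S$ one has $\braket{(\bc,\bd),(\ba,\ze)}_s=\bc^{\T}\ze-\bd^{\T}\ba=-\bd^{\T}\ba$, so $(\bc,\bd)$ annihilates all of $S$ under the symplectic form \ifif $\bd$ is Euclidean-orthogonal to every codeword of $C_1$. Writing $C_1^{\perp}$ for the ordinary (Euclidean) dual of $C_1$, this yields $S^{\perp}=\fpn\times C_1^{\perp}$, with no constraint whatsoever on the first component.

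Finally I would exhibit a low weight vector. Since $S=C_1\times\{\ze\}$ has dimension $\dim C_1$, a genuine code (one with $k\geqslant 1$, i.e.\ $\dim S<n$) forces $C_1\neq\fpn$, so some standard basis vector $\boldsymbol{e}_j$ (with a $1$ in coordinate $j$ and $0$ elsewhere) fails to lie in $C_1$. Because $\ze\in C_1^{\perp}$, the vector $(\boldsymbol{e}_j,\ze)$ lies in $S^{\perp}=\fpn\times C_1^{\perp}$, while $\boldsymbol{e}_j\notin C_1$ keeps it out of $S=C_1\times\{\ze\}$; its joint weight is $1$. As every vector in $S^{\perp}\ssm S$ is nonzero and hence of joint weight at least $1$, the minimum joint weight over $S^{\perp}\ssm S$ is exactly $1$, so $\qc(\s)$ has distance $1$.

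I expect the only subtlety to be the boundary case $C_1=\fpn$, where $S=\fpn\times\{\ze\}=S^{\perp}$ is $n$-dimensional, $k=0$, and $S^{\perp}\ssm S=\emptyset$; this is the degenerate one-dimensional code that stores no information, which I would simply exclude from the statement (or note that it is not a genuine CSS code encoding anything). Apart from isolating this case, the argument is a one-line application of Proposition~\ref{gpfuns} together with the elementary dual computation above, so I do not anticipate a serious obstacle.
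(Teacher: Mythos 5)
Your proof is correct and follows essentially the same route as the paper, which states this proposition without a formal proof, deriving it from exactly the observations you use: the CSS form $S=C_1\times C_2$ puts $(\ze,\bb)\in S$ for every $\bb\in C_2$, so Proposition~\ref{gpfuns} forces $C_2=\{\ze\}$, and a joint-weight-one vector in $S^{\perp}\ssm S$ follows. Your explicit dual computation $S^{\perp}=\fpn\times C_1^{\perp}$ and your isolation of the degenerate case $C_1=\fpn$ (where $S=S^{\perp}$ and the paper's definition of minimum distance is vacuous) merely fill in details the paper leaves implicit.
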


For a uniquely negacyclic subspace the totally isotropic condition simplifies to the following form:
\begin{proposition}\label{icip}
 Let $S$ be a uniquely negacyclic subspace of $\fpns$ with the generating pair $(g,f)$. The subspace $S$
 is totally isotropic \ifif the following condition holds:
 \begin{equation}
  g(X)f(X^{-1})=f(X)g(X^{-1}) \mxn.
 \end{equation}
 Further, any element $(\ba,\bb)$ of $\fpns$ is in $S^{\perp}$ \ifif the following condition holds:
 \begin{equation}
  a(X)f(X^{-1})=b(X)g(X^{-1}) \mxn.
 \end{equation}
\end{proposition}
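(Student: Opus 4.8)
The plan is to reduce both statements to the criterion already established in the proposition immediately preceding this one, namely that a simultaneously negacyclic subspace $S$ is totally isotropic \ifif $a(X)d(X^{-1})-b(X)c(X^{-1})=0 \mxn$ for all pairs $(\ba,\bb),(\bc,\bd)\in S$. Since $S$ is assumed uniquely negacyclic with generating pair $(g,f)$, Proposition~\ref{gpfuns} tells us that $S=\{(ag,af):a\in\rn\}$, so every element of $S$ is a multiple of $(g,f)$. This is the structural fact that should collapse the universally quantified isotropy condition down to a single polynomial identity.

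For the first equivalence, I would substitute $(\ba,\bb)=(ag,af)$ and $(\bc,\bd)=(cg,cf)$ into the preceding criterion. The condition becomes $a(X)g(X)\,c(X^{-1})f(X^{-1})-a(X)f(X)\,c(X^{-1})g(X^{-1})=0 \mxn$, i.e. $a(X)c(X^{-1})\bigl[g(X)f(X^{-1})-f(X)g(X^{-1})\bigr]=0 \mxn$, required to hold for all $a,c\in\rn$. The backward direction is immediate: if $g(X)f(X^{-1})=f(X)g(X^{-1}) \mxn$ then the bracketed factor vanishes, so the product is zero for every $a,c$ and $S$ is totally isotropic. For the forward direction I would take $a=c=1$ (the constant polynomial), which forces exactly $g(X)f(X^{-1})-f(X)g(X^{-1})=0 \mxn$. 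Thus the single identity with $a=c=1$ is both necessary and sufficient, which is the clean feature being exploited here.

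For the second equivalence, I would unwind the definition of $S^{\perp}$: an element $(\ba,\bb)\in\fpns$ lies in $S^{\perp}$ \ifif $\braket{(\ba,\bb),(\bc,\bd)}_s=0$ for all $(\bc,\bd)\in S$, i.e. for all $(cg,cf)$ with $c\in\rn$. Applying the coefficient computation from the preceding proof, orthogonality of $(\ba,\bb)$ to $(cg,cf)$ for every $c$ is equivalent to $a(X)\,c(X^{-1})f(X^{-1})-b(X)\,c(X^{-1})g(X^{-1})=0 \mxn$ for all $c$, that is $c(X^{-1})\bigl[a(X)f(X^{-1})-b(X)g(X^{-1})\bigr]=0 \mxn$ for all $c\in\rn$. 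Taking $c=1$ yields the stated condition $a(X)f(X^{-1})=b(X)g(X^{-1}) \mxn$, and conversely this identity makes the product vanish for every $c$.

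The one point that needs genuine care, rather than routine substitution, is the legitimacy of setting $a=c=1$ to extract the single identity: I should confirm that the constant polynomial $1$ indeed corresponds to the vector $(g,f)$ itself under the parametrization $S=\{(ag,af)\}$, so that testing against the generating element alone suffices. Since ranging $c$ over all of $\rn$ in particular includes $c=1$, and the bracketed polynomial does not depend on $c$, this specialization is valid and the equivalences follow. I do not anticipate a serious obstacle beyond bookkeeping with the $X^{-1}=-X^{n-1}$ substitution inside $\rn$; the main work is simply recognizing that unique negacyclicity lets one replace a family of conditions by a single generator-level identity.
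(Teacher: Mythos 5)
Your proof is correct and takes essentially the same route as the paper: both invoke Proposition~\ref{gpfuns} to write $S=\{(ag,af):a\in\rn\}$, substitute into the polynomial isotropy criterion of the preceding proposition, and collapse the universally quantified condition to the single generator-level identity, with the dual handled by requiring $a(X)c(X^{-1})f(X^{-1})-b(X)c(X^{-1})g(X^{-1})=0 \mxn$ for all $c\in\rn$. Your explicit specialization to $a=c=1$ (and the remark that shifts $X^k c$ recover the full polynomial identity from constant terms) is exactly the step the paper leaves implicit in its ``this is true \ifif'' assertions.
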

\begin{proof}
 Let $S$ be a uniquely negacyclic subspace of $\fpns$ with the generating pair $(g,f)$. So
 $S=\{(ag,af):a\in\rn\}$ and $S$ is totally isotropic \ifif for any two elements $(ag,af)$ and $(bg,bf)$,
 where $a,b\in\rn$, we have
 \begin{equation*}
  a(X)g(X)b(X^{-1})f(X^{-1})-a(X)f(X)b(X^{-1})g(X^{-1})=0 \mxn.
 \end{equation*}
 This is true \ifif $g(X)f(X^{-1})=f(X)g(X^{-1}) \mxn$.
 
 For the second part, $(\ba,\bb)\in S^{\perp}$ \ifif
 \begin{equation*}
  a(X)c(X^{-1})f(X^{-1})-b(X)c(X^{-1})g(X^{-1})=0 \mxn \mbox{ for all } c\in\rn,
 \end{equation*}
 and this holds \ifif $a(X)f(X^{-1})=b(X)g(X^{-1}) \mxn$.
  
\end{proof}

Let $\fps=\fpe$ be a quadratic extension of $\fp$, where $\eta$ is a root of some irreducible quadratic polynomial
over $\fp$. The map $(\ba,\bb)\longmapsto\ba+\eta\bb$ is an isomorphism from the vector space $\fpns$ over $\fp$
to the vector space $\fpsn$ over
$\fp$. The latter vector space, i.e., $\fpsn$ is also
a vector space over the field $\fps$.
The elements of the product ring $\rn\times\rn$ can be identified with the elements of
$\rn(\eta):=\fps[X]/\ideal{X^n+1}$ using the map $(\ba,\bb)\longmapsto\ba+\eta\bb$.
Let $\qc(\s)$ be the stabilizer code of the totally isotropic subspace $S$ of $\fpns$.
If $\stil$ denote the image of $S$ in $\fpsn$, then the quantum code $\qc(\s)$ is said to be {\it linear} if
$\stil$ is a subspace of $\fpsn$ over the field $\fps$. The image $\stil$ of $S$
will also be denoted by $S$ again.
It is easy to check that if any simultaneously negacyclic subspace $S$ is a subspace of $\fpsn$ over
$\fps$, then $S$ is a negacyclic subspace of $\fpsn$ over $\fps$. The following proposition gives an important
characterization of linear stabilizer codes:
\begin{proposition}\label{lei}
 Let $S$ be a totally isotropic simultaneously negacyclic subspace of $\fpns$.
 The stabilizer
 code $\qc(\s)$ is linear \ifif $S$ is an ideal of the ring $\rn(\eta)$. If
 $\qc(\s)$ is linear, then the dual $S^{\perp}$ of $S$ is also an ideal of
 $\rn(\eta)$.
\end{proposition}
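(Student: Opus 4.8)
The plan is to translate both \emph{linearity} and \emph{being an ideal} into closure conditions on $S$ under a small generating set of operations, and to observe that the hypotheses already supply all but one of these conditions, so that the two notions collapse onto the same residual requirement.

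First I would record the dictionary furnished by the identification $(\ba,\bb)\mapsto\ba+\eta\bb$ between $\fpns$ and $\rn(\eta)=\fps[X]/\ideal{X^n+1}$. Under it, the element attached to $(\ba,\bb)$ is the polynomial $a(X)+\eta b(X)$; multiplication by $X$ corresponds to the simultaneous negacyclic shift $(\ba,\bb)\mapsto(N\ba,N\bb)$, and multiplication by scalars from $\fp$ is the ambient $\fp$-vector space structure. Since $\rn(\eta)$ is generated as an $\fp$-algebra by $X$ together with $\eta$ (because $\fps=\fp+\fp\eta$), an $\fp$-subspace of $\rn(\eta)$ is an ideal \ifif it is closed under multiplication by $X$ and by $\eta$. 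Now $S$ is by hypothesis an $\fp$-subspace and is simultaneously negacyclic, hence already closed under $\fp$-scalars and under multiplication by $X$. Therefore $S$ is an ideal of $\rn(\eta)$ \ifif $S$ is closed under multiplication by $\eta$; and $\qc(\s)$ is linear, i.e. $S=\stil$ is an $\fps$-subspace, \ifif $S$ is closed under multiplication by $\eta$ as well (closure under the remainder of $\fps=\fp+\fp\eta$ being automatic). This settles the first assertion, both conditions being equivalent to closure under $\eta$.

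For the second assertion, by Proposition~\ref{ssp} the dual $S^{\perp}$ is again simultaneously negacyclic, and it is certainly an $\fp$-subspace; so by the criterion just established, $S^{\perp}$ is an ideal \ifif it is closed under multiplication by $\eta$. Let $M_\eta$ denote the $\fp$-linear operator on $\fpns$ corresponding to multiplication by $\eta$ in $\rn(\eta)$. Writing $\eta^2=\beta\eta+\alpha$ with $\alpha,\beta\in\fp$, a direct computation gives $M_\eta(\ba,\bb)=(\alpha\bb,\ba+\beta\bb)$. The crux is the behaviour of $M_\eta$ under the symplectic form: using the matrix $J=\begin{pmatrix}0&I\\-I&0\end{pmatrix}$ of the form, I would compute the symplectic adjoint $M_\eta^{*}$, characterized by $\braket{M_\eta\bu,\bv}_s=\braket{\bu,M_\eta^{*}\bv}_s$, via $M_\eta^{*}=J^{-1}M_\eta^{\T}J$, and verify that it equals $M_{\bar\eta}$, where $\bar\eta=\beta-\eta=\eta^{p}$ is the conjugate of $\eta$ over $\fp$.

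Granting the identity $M_\eta^{*}=M_{\bar\eta}$, the conclusion is immediate. Being an ideal, $S$ is closed under $M_\eta$, and being an $\fp$-subspace it is then closed under $M_{\bar\eta}=\beta I-M_\eta$. Hence for any $\bu\in S^{\perp}$ and any $\bv\in S$ we have $M_{\bar\eta}\bv\in S$, so $\braket{M_\eta\bu,\bv}_s=\braket{\bu,M_{\bar\eta}\bv}_s=0$, whence $M_\eta\bu\in S^{\perp}$. Thus $S^{\perp}$ is closed under multiplication by $\eta$ and, by the criterion above, is an ideal of $\rn(\eta)$. The main obstacle is precisely the adjoint identity $M_\eta^{*}=M_{\bar\eta}$: it is the single point where the symplectic geometry must be reconciled with the conjugation of the quadratic extension, and keeping the constants $\alpha,\beta$ aligned through the block-matrix computation is where care is required.
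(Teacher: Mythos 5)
Your proof is correct and takes essentially the same route as the paper's: the first assertion is, in both cases, the observation that for a simultaneously negacyclic $\fp$-subspace both ``ideal of $\rn(\eta)$'' and ``$\fps$-linear'' reduce to closure under multiplication by $\eta$, and your adjoint identity $M_\eta^{*}=M_{\bar\eta}$ is exactly the paper's inline computation $\braket{(\bx,\by),(-c_0\bb,\ba-c_1\bb)}_s=\braket{(-c_1\bx+c_0\by,-\bx),(\ba,\bb)}_s$ together with $-c_1\bx+c_0\by-\eta\bx=(-\eta-c_1)(\bx+\eta\by)\in S$, where $-\eta-c_1$ is precisely the conjugate $\bar\eta$. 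The only difference is packaging: you state the symplectic adjoint of the multiplication-by-$\eta$ operator as a lemma and verify it by a block-matrix computation, whereas the paper performs the same transfer of $\eta$ across the form directly on elements.
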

\begin{proof}
 Since $S$ is given to be simultaneously negacyclic, $\qc(\s)$ is linear \ifif $S$
 is a negacyclic subspace of $\fpsn$ over $\fps$.
 Thus the first part of the proposition follows from the fact
 that any negacyclic cyclic subspace of $\fpsn$ over $\fps$ corresponds to an ideal of
 $\fps[X]/\ideal{X^n+1}$.
 
 From Proposition~\ref{ssp}, $S^{\perp}$ is a simultaneously negacyclic subspace of $\fpns$. Therefore,
 if we show that $S^{\perp}$ is a subspace of $\fpsn=\fpen$ over $\fps=\fpe$, then using argument similar to
 previous paragraph, $S^{\perp}$ is an ideal of $\rn(\eta)$.
 Thus we need to show, for any element $\ba+\eta\bb\in S^{\perp}$, the element
 $\eta(\ba+\eta\bb)\in S^{\perp}$.
 Let $\eta$ be a root of $X^2+c_1X+c_0$, where $c_0$ and $c_1$ are in $\fp$.
 Then $\eta(\ba+\eta\bb)=-c_0\bb+\eta(\ba-c_1\bb)$. Now for any
 element $(\bx,\by)\in S$ we have,
 \begin{equation*}
  \braket{(\bx,\by),(-c_0\bb,\ba-c_1\bb)}_s=\braket{(-c_1\bx+c_0\by,-\bx),(\ba,\bb)}_s=0,
 \end{equation*}
 since $-c_1\bx+c_0\by-\eta\bx=(-\eta-c_1)(\bx+\eta\by)\in S$. Thus
 $\eta(\ba+\eta\bb)\in S^{\perp}$. This completes the proof.
  
\end{proof}

\section{Linear \texorpdfstring{$t$}{t}-Frobenius Negacyclic Codes}
\label{LtFNC}
In this section, we study linear negacyclic quantum stabilizer codes whose length $n$ divides $p^t+1$ for some positive
integer $t$, where
$\frac{p^t+1}{n}$ is odd. Recall that $\rn$ denote the ring $\fp[X]/\ideal{X^n+1}$.
The following proposition illustrates a key simplification achieved by considering codes
of such length:
\begin{proposition}\label{pt+1}
 Suppose $n$ divides $p^t+1$ for some positive integer $t$ where $\frac{p^t+1}{n}$ is an odd integer.
 Then there exists an embedding of $\fp[X]/\ideal{X^n+1}$ into $\fp[X]/\langle X^{p^t+1}+1\rangle$,
 and hence the multiplicative inverse of $X$ in $\fp[X]/\ideal{X^n+1}$
 is equal to $-X^{p^t}$. Thus, for any polynomial $f(X)$ in $\fp[X]/\ideal{X^n+1}$, we have $f(X^{-1})=f^{p^t}(-X)$.
\end{proposition}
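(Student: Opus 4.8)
The plan is to isolate the single divisibility fact that drives everything: that $X^n+1$ divides $X^{p^t+1}+1$ in $\fp[X]$. Writing $s=\frac{p^t+1}{n}$, which is odd by hypothesis, I would substitute $Y=X^n$ and use the factorization $Y^s+1=(Y+1)(Y^{s-1}-Y^{s-2}+\cdots+1)$, valid precisely because $s$ is odd (this is exactly the statement that $Y=-1$ is a root of $Y^s+1$, so the oddness cannot be dropped). Setting $Y=X^n$ then gives $X^n+1\mid X^{ns}+1=X^{p^t+1}+1$. To produce the embedding I would send $X\mapsto X^s$, i.e. take the $\fp$-algebra map $f(X)\mapsto f(X^s)$ from $\rn$ into $\fp[X]/\ideal{X^{p^t+1}+1}$; it is well defined because $(X^s)^n+1=X^{p^t+1}+1\equiv 0$ in the target, and injective by a degree count, since a representative of degree $<n$ is carried to one of degree $\leqslant s(n-1)<ns=p^t+1$ and hence can reduce to $0$ only if it was already the zero polynomial.

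Next I would read off the inverse of $X$. The cleanest route stays inside $\rn$ itself: there $X^n=-1$, so $X^{p^t+1}=X^{ns}=(X^n)^s=(-1)^s=-1$ because $s$ is odd, whence $X\cdot(-X^{p^t})=-X^{p^t+1}=1$ and therefore $X^{-1}=-X^{p^t}$. (Equivalently, in the larger ring $X^{p^t+1}=-1$ is immediate, and one may transport the identity back along the embedding using injectivity; this is the sense in which the inverse formula is a consequence of the embedding.) As a consistency check this agrees with the earlier formula $X^{-1}=-X^{n-1}$, since $X^{p^t}=X^{ns-1}=X^{ns}X^{-1}=-X^{-1}=X^{n-1}$ in $\rn$.

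Finally, for the identity $f(X^{-1})=f^{p^t}(-X)$, I would read the right-hand side as $\bigl(f(-X)\bigr)^{p^t}$ and exploit that the $p^t$-power map is a ring endomorphism of $\rn$ (it is of characteristic $p$) fixing $\fp$. Applying it to $f(-X)=\sum_i a_i(-X)^i$ with $a_i\in\fp$ gives $\sum_i a_i\bigl((-X)^{p^t}\bigr)^i$, using $a_i^{p^t}=a_i$. Since $p$ is an odd prime, $p^t$ is odd, so $(-X)^{p^t}=-X^{p^t}$, which equals $X^{-1}$ by the previous step; substituting yields $\sum_i a_i(X^{-1})^i=f(X^{-1})$. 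The only genuine care is the sign bookkeeping, and I expect the main (mild) obstacle to be pinning down the meaning of the notation $f^{p^t}(-X)$ and checking that the Frobenius, the substitution $X\mapsto -X$, and the inverse formula compose in the correct order; once the reading $\bigl(f(-X)\bigr)^{p^t}$ is fixed, the identity collapses to a one-line application of the freshman's dream together with $(-1)^{p^t}=-1$.
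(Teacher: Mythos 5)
Your proof is correct, and in fact the paper states Proposition~\ref{pt+1} \emph{without any proof} (it is presented as an immediate observation), so there is no in-paper argument to diverge from; your proposal supplies exactly the missing details. The three steps all check out: the factorization $Y^s+1=(Y+1)(Y^{s-1}-\cdots+1)$ for odd $s=\frac{p^t+1}{n}$ gives $X^n+1\mid X^{p^t+1}+1$; the map $X\mapsto X^s$ is a well-defined unital $\fp$-algebra homomorphism whose injectivity follows from your degree count $\deg f(X^s)\leqslant s(n-1)<p^t+1$; and the inverse formula is, as you observe, really a one-line computation inside $\rn$ itself, $X\cdot(-X^{p^t})=-(X^n)^s=-(-1)^s=1$, with the embedding serving only as motivation for the word ``hence'' in the statement. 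Your reading of $f^{p^t}(-X)$ as $\bigl(f(-X)\bigr)^{p^t}$ is the one the paper uses later (e.g.\ in the proofs of Lemma~\ref{lguc} and Theorem~\ref{ncflc}, where $g(X^{-1})=g^{p^t}(-X)=g^{p^t}(X)$ is invoked for $g(-X)=g(X)$), and the Frobenius computation with $a_i^{p^t}=a_i$ and $(-X)^{p^t}=-X^{p^t}$ (valid since $p^t$ is odd) is exactly right. One small point worth making explicit: the substitution $X\mapsto -X$ is a well-defined automorphism of $\rn$ only because $n$ is even, which does follow from the hypotheses ($p^t+1$ is even and $s$ is odd, so $n=\frac{p^t+1}{s}$ is even); adding that remark would close the last pedantic gap in the sign bookkeeping you were rightly worried about.
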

$\rn(\eta)=\fps[X]/\ideal{X^n+1}$ is a principal ideal ring, and any ideal of it is generated by factors of
$X^n+1$ over $\fps$.
\begin{definition}\label{tfnc}
 A $t$-Frobenius negacyclic code is defined as a negacyclic quantum stabilizer code whose length $n$
 divides $p^t+1$ and $\frac{p^t+1}{n}$ is an odd integer.
\end{definition}
\begin{lemma}\label{lguc}
 Let $S$ be the totally isotropic ideal associated to a linear $t$-Frobenius negacyclic code over $\fp$ of length $n$. If
 the generator $g(X)$ of the ideal in $\rn$ corresponding to $F:=\{\ba:(\ba,\bb)\in S\}$ satisfy the
 condition $g(-X)=g(X)$, then $S$ is uniquely
 negacyclic.
\end{lemma}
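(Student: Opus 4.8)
The plan is to reduce the statement to the criterion of Proposition~\ref{gpfuns}, which says that $S$ is uniquely negacyclic \ifif the only element of $S$ with vanishing first component is the zero vector. So the whole task becomes: show that $(\ze,\bb)\in S$ forces $\bb=\ze$. Fix such an element; under the identification $(\ba,\bb)\mapsto\ba+\eta\bb$ it corresponds to $\eta\bb\in\rn(\eta)$, and write $b(X)$ for the polynomial attached to $\bb$. I expect to produce two complementary divisibility facts about $b(X)$ and then combine them.

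First I would extract $g(X)\mid b(X)$ purely from linearity. Since $\qc(\s)$ is linear, Proposition~\ref{lei} gives that $S$ is an ideal of $\rn(\eta)$, hence closed under multiplication by the scalar $\eta\in\fps$. Writing $\eta^2=-c_1\eta-c_0$ with $c_0\neq 0$ (the constant term of an irreducible quadratic cannot vanish), the product $\eta\cdot\eta\bb=\eta^2\bb=-c_0\bb+\eta(-c_1\bb)$ lies in $S$; this is the pair $(-c_0\bb,-c_1\bb)$, whose first coordinate shows $\bb\in F$. As $g$ generates the ideal $F$ in $\rn$, this yields $g(X)\mid b(X)$.

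The second fact comes from total isotropy together with the hypothesis. Because $g\in F$, there is some $b_0$ with $(\boldsymbol{g},\boldsymbol{b_0})\in S$; applying the polynomial form of total isotropy to the pair $(\ze,\bb)$ and $(\boldsymbol{g},\boldsymbol{b_0})$ gives $b(X)\,g(X^{-1})=0\mxn$. Now Proposition~\ref{pt+1} rewrites $g(X^{-1})=g(-X)^{p^t}$, and the assumption $g(-X)=g(X)$ collapses this to $g(X^{-1})=g(X)^{p^t}$, so that $b(X)\,g(X)^{p^t}=0\mxn$. Finally, since $\gcd(n,p)=1$ the polynomial $X^n+1$ is squarefree, so I write $X^n+1=g(X)\bar g(X)$ with $\gcd(g,\bar g)=1$; then $\gcd(\bar g,g^{p^t})=1$ as well, and $\bar g\mid b\,g^{p^t}$ forces $\bar g\mid b$. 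Combined with $g\mid b$ and the coprimality of $g$ and $\bar g$, this gives $(X^n+1)\mid b$, i.e.\ $\bb=\ze$, which is exactly what Proposition~\ref{gpfuns} demands.

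The main obstacle is the passage in the third step. The bare relation $b(X)\,g(X^{-1})=0$ is too weak on its own, and the only thing that rescues it is the evenness hypothesis $g(-X)=g(X)$: it forces $g(X^{-1})$ to share its radical with $g(X)$, so that $g(X^{-1})$ and the complementary factor $\bar g$ are coprime. This is precisely what lets the squarefree factorization of $X^n+1$ convert the single relation into the divisibility $\bar g\mid b$, which then joins $g\mid b$ to annihilate $b$. Without the evenness assumption, $g(X^{-1})$ could involve factors disjoint from $g$, and the argument would break down.
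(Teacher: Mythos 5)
Your proof is correct and takes essentially the same route as the paper's: the paper likewise derives $g\mid b$ from closure of the ideal under multiplication by $\eta$ (the $\fps$-linearity of $S$) and derives $\frac{X^n+1}{g}\mid b$ from total isotropy combined with $g(X^{-1})=g^{p^t}(-X)=g^{p^t}(X)$ via Proposition~\ref{pt+1} and the hypothesis $g(-X)=g(X)$, then concludes using $\gcd\left(g,\frac{X^n+1}{g}\right)=1$. The only cosmetic difference is that you invoke the criterion of Proposition~\ref{gpfuns} explicitly, whereas the paper verifies uniqueness directly by setting $h=f-\tilde{f}$ for two candidate second components and showing $h=0$; the content is identical.
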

\begin{proof}
 Let $\fpe$ be the quadratic extension of $\fp$ such that $S$ corresponds to an ideal of $\fpe[X]/\ideal{X^n+1}$.
 Let $\eta$ be a root of a irreducible quadratic $X^2+c_1X+c_0$ over $\fp$.
 Since $g(X)\in F$, there exists $\boldsymbol{f}\in\fpn$, such that
 $(\boldsymbol{g},\boldsymbol{f})\in S$. Let $\boldsymbol{\tilde{f}}$ be such that $(\boldsymbol{g},
 \boldsymbol{\tilde{f}})$ is also in $S$. It is enough to show that $f=\tilde{f} \mxn$. Let $\boldsymbol{h}=
 \boldsymbol{f}-\boldsymbol{\tilde{f}}$. Thus $(\ze,\boldsymbol{h})\in S$. We show separately that $h=0$ mod
 $g$ and $h=0$ mod $\frac{X^n+1}{g}$, and then the result follows from the fact that $\gcd(g,\frac{X^n+1}{g})=1$.
 
 Since $S$ is totally isotropic, and $(\ze,\boldsymbol{h})$ and $(\boldsymbol{g},\boldsymbol{f})$ are elements
 of $S$, we have $g(X^{-1})h(X)=0 \mxn$. We also have from Proposition~\ref{pt+1}, that $g(X^{-1})=g^{p^t}(-X)=
 g^{p^t}(X)$. Therefore, $g^{p^t}h=0 \mxn$. Since $g$ has an inverse modulo $\frac{X^n+1}{g}$, we conclude that
 $h=0\mod\frac{X^n+1}{g}$.
 
 We have $\eta\boldsymbol{h}\in S$. Since $S$ is $\fpe$-linear, $\eta^2\boldsymbol{h}=-c_0\boldsymbol{h}-
 \eta c_1\boldsymbol{h}$ is in $S$. Further, any element of $S$ is of the form $(ag,b)$ for some polynomials $a,b\in
 \fp[X]/\ideal{X^n+1}$. Therefore, $-c_0h=ag$ and hence $h=0\mod g$. This completes the proof.
  
\end{proof}

Let $\fpe$ be a $d$ degree extension of $\fp$. Consider the Frobenius automorphism
$\sigma$ on this extension $\fpe$ which maps any element $\alpha$ in $\fpe$ to $\alpha^p\in\fpe$.
Note that $\sigma(\alpha)=
\alpha$ \ifif $\alpha\in\fp$. The automorphism can be extended naturally to $\fpe[X]$ by mapping
any polynomial $a(X)=a_0+a_1X+\cdots+a_{n-1}X^{n-1}$ in $\fpe[X]/\ideal{X^n+1}$ to $\sigma(a):=
\sigma(a_0)+\sigma(a_1)X+\cdots+\sigma(a_{n-1})X^{n-1}$. This is called the {\it Frobenius involution}.
\begin{lemma}\label{fl}
 Let $n$ divide $p^t+1$ for some positive integer $t$, where $\frac{p^t+1}{n}$ is an odd integer.
 \begin{enumerate}[{\rm (1)}]
  \item Any irreducible factor of $X^n+1$ over $\fp$ other than the possible linear factors has even degree.
  \item Let $f(X)$ be any irreducible factor of $X^n+1$ over $\fp$ whose degree is divisible by some positive
  integer $k$. The polynomial $f(X)$ splits into $k$ irreducible factors
  $f_0(X,\eta),\ldots,f_{k-1}(X,\eta)$ such that $f_i=\sigma^i(f_0)$
  over the extension field $\mathbb{F}_{p^k}=\fpe$.
 \end{enumerate}
\end{lemma}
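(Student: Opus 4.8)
The plan is to build both parts on a single arithmetic identity: every root $\alpha$ of $X^n+1$ in $\overline{\fp}$ satisfies $\alpha^{p^t}=-\alpha^{-1}$. This follows from Proposition~\ref{pt+1}, which identifies $X^{-1}$ with $-X^{p^t}$ in $\rn$; alternatively, writing $p^t+1=kn$ with $k=\frac{p^t+1}{n}$ odd, one computes $\alpha^{p^t}=(\alpha^n)^k\alpha^{-1}=(-1)^k\alpha^{-1}=-\alpha^{-1}$. Since $\gcd(n,p)=1$, the polynomial $X^n+1$ is separable, so its roots are distinct. The map $\tau:\beta\mapsto -\beta^{-1}$ sends roots to roots (if $\beta^n=-1$ then $(-\beta^{-1})^n=-1$) and is an involution, and the displayed identity says precisely that the $t$-fold Frobenius $\sigma^t:\beta\mapsto\beta^{p^t}$ coincides with $\tau$ on the set of all roots.

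For part (1), let $f$ be an irreducible factor of $X^n+1$ over $\fp$ of degree $d$ and fix a root $\alpha$ of $f$. The roots of $f$ are exactly $\alpha,\alpha^{p},\ldots,\alpha^{p^{d-1}}$, so indexing $\alpha^{p^i}$ by $i\in\mathbb{Z}/d$, the Frobenius $\sigma$ acts as $i\mapsto i+1$ and hence $\sigma^t$ acts as $i\mapsto i+t$. Because $\sigma^t=\tau$ on the roots and $\tau^2$ is the identity, the shift by $t$ has order dividing $2$ on $\mathbb{Z}/d$, i.e.\ $d\mid 2t$. Suppose $d$ is odd; then $\gcd(d,2)=1$ forces $d\mid t$, so the shift by $t$ is trivial and $\tau$ fixes $\alpha$. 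Thus $-\alpha^{-1}=\alpha$, that is $\alpha^2=-1$, so $\alpha$ is a root of $X^2+1\in\fp[X]$ and $[\fp(\alpha):\fp]\in\{1,2\}$. As this degree equals the odd number $d$, we conclude $d=1$. Equivalently, any factor of degree at least $2$ has even degree, which is the assertion.

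For part (2), take $f$ irreducible over $\fp$ of degree $d$ with $k\mid d$, and work over $\fpe=\mathbb{F}_{p^k}$. A standard fact about irreducible polynomials over finite fields gives that $f$ splits over $\mathbb{F}_{p^k}$ into $\gcd(d,k)=k$ distinct irreducible factors, each of degree $d/k$. Because $f$ has coefficients in $\fp$, the Frobenius involution $\sigma$ fixes $f$ and therefore permutes its irreducible factors over $\mathbb{F}_{p^k}$. To see this permutation is a single $k$-cycle, I identify factors with root sets: the factors correspond to the orbits of $\langle\sigma^k\rangle$ on the single Frobenius orbit $\{\alpha^{p^i}:i\in\mathbb{Z}/d\}$, and since $k\mid d$ these orbits are exactly the residue classes of $i$ modulo $k$. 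Let $f_i$ be the factor whose roots are $\{\alpha^{p^j}:j\equiv i \ (\mathrm{mod}\ k)\}$. Since $\sigma$ applied to a monic polynomial raises each root to the $p$-th power (Frobenius commutes with the elementary symmetric functions), $\sigma$ sends the root set of $f_i$ to that of $f_{i+1}$, whence $f_{i}=\sigma^{i}(f_0)$ and $\sigma^{k}(f_0)=f_0$. This yields the factorization $f=f_0f_1\cdots f_{k-1}$ with $f_i=\sigma^i(f_0)$, as required.

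The routine ingredient is the count of irreducible factors of $f$ over $\mathbb{F}_{p^k}$, and the only care needed in part (2) is the verification that the coefficient-wise Frobenius $\sigma$ on $\mathbb{F}_{p^k}[X]$ acts on a product of linear factors by raising the roots to the $p$-th power. The genuinely delicate point is the parity argument in part (1): one must pin down the action of $\sigma^t$ on a single Frobenius orbit as a shift and then exploit $\tau^2=\mathrm{id}$ to force a fixed point exactly when $d$ is odd, collapsing such factors to linear ones. Getting the boundary behaviour right (the degree-$2$ factor $X^2+1$ and the excluded linear factors) is where I expect the main subtlety to lie.
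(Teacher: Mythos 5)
Your proof is correct, and it runs on the same engine as the paper's: the identity $\beta^{p^t}=-\beta^{-1}$ for roots of $X^n+1$ (which uses the oddness of $\frac{p^t+1}{n}$ exactly as you note), i.e., the fact that $\sigma^t$ coincides on the roots with the involution $\tau:\beta\mapsto-\beta^{-1}$. The finishing moves differ, though, in both parts. In part (1) the paper sets $X^2+1$ aside, observes that $\tau$ is then a fixed-point-free involution on the root set of any other nonlinear irreducible factor (a fixed point would satisfy $\beta^2+1=0$), and concludes evenness by pairing the roots; you instead read $\sigma^t$ as the shift $i\mapsto i+t$ on the index set $\mathbb{Z}/d$ of the Frobenius orbit, extract the divisibility $d\mid 2t$, and show that odd $d$ forces $\tau$ to fix a root, collapsing the factor to a linear one. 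Both are sound: the paper's pairing argument is the more economical, while yours yields the extra information $d\mid 2t$ and absorbs the $X^2+1$ boundary case into the final step ($\alpha^2=-1$ gives $d\in\{1,2\}$, and oddness kills $d=2$) rather than excluding it up front. In part (2) the paper is self-contained: it computes that every irreducible factor of $f$ over $\mathbb{F}_{p^k}$ has degree $m=d/k$ via the extension $\mathbb{F}_{p^d}/\mathbb{F}_{p^k}$, and then matches $f$ with the product $f_0\sigma(f_0)\cdots\sigma^{k-1}(f_0)$ by a degree count (the paper's exponent ``$d-1$'' in that product is a typo for $k-1$, as the degree bookkeeping shows); you instead quote the standard theorem that $f$ splits into $\gcd(d,k)=k$ distinct irreducible factors of degree $d/k$, and then identify the factors with residue classes modulo $k$ of root indices. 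Your route makes explicit two points the paper leaves implicit --- that the conjugates $\sigma^i(f_0)$, $0\leqslant i<k$, are pairwise distinct, and that $\sigma$ permutes them in a single $k$-cycle --- whereas the paper's degree-count step tacitly relies on this distinctness (or on a gcd argument using irreducibility of $f$ over $\fp$). Net: same skeleton throughout, with your part (2) the more complete write-up and the paper's part (1) the tidier one.
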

\begin{proof}
 (1) Let $f(X)$ be any irreducible factor of $X^n+1$ over $\fp$ other than the possible linear factors. If
 $f(X)=X^2+1$, then it is of even degree.
 
 Assume that $f(X)\neq X^2+1$ and without loss of generality it can be chosen as monic.
 If degree of $f(X)$ is $d$, then the splitting field of $f(X)$ over $\fp$ is
 $\mathbb{F}_{p^d}$. Let $\beta$ be a root of $f(X)$ in $\mathbb{F}_{p^d}$. So, $\beta^n+1=0$
 and $n$ divides $p^t+1$ with odd quotient $\frac{p^t+1}{n}$. Therefore, $\beta^{p^t+1}+1=0$, i.e.,
 $\beta^{p^t}=-\beta^{-1}$. Moreover, $\beta^{p^t}=\sigma^t(\beta)$. Since $\sigma(f)=f$, $\sigma$ maps roots
 $f(X)$ to roots of itself. Therefore, $\beta^{p^t}=-\beta^{-1}$ is also a root of $f(X)$. If $\beta=
 -\beta^{-1}$, then $\beta^2+1=0$, which is not the case. Thus, roots of $f(X)$ occurs in pairs. Hence $f(X)$
 is of even degree.
 
 (2) Let $f(X)$ be any irreducible factor of $X^n+1$ over $\fp$ such that degree of $f(X)$ is $d=km$ for some
 positive integer $m$. Thus the splitting field of $f(X)$ is $\mathbb{F}_{p^d}=\mathbb{F}_{p^{km}}$ and hence
 it contains $\mathbb{F}_{p^k}$. Thus, degree of any irreducible factor of $f(X)$ over $\mathbb{F}_{p^k}=\fpe$
 is the degree of the extension $\mathbb{F}_{p^d}/\mathbb{F}_{p^k}$ which is $m$. Since $\sigma$ is a field
 automorphism, it maps any irreducible factor of $f(X)$ over $\mathbb{F}_{p^k}$
 to the other irreducible factors.
 Over $\mathbb{F}_{p^k}$ order of $\sigma$ is $k$. Let $f_0(X,\eta)$ be an irreducible factor of $f(X)$
 over $\mathbb{F}_{p^k}$. Clearly the product $f_0\sigma(f_0)\cdots\sigma^{d-1}(f_0)$ is of degree $km$. Thus
 $f=f_0\sigma(f_0)\cdots\sigma^{d-1}(f_0)$. This completes the proof.
  
\end{proof}

If $X-\alpha$ is a factor of $X^n+1$ over $\fp$, then $\alpha=-\alpha^{-1}$ and
hence it is an element of order $4$ in $\fp$. It is easy to verify, that $\fp$ has an element of order $4$ \ifif
$p=1\mod 4$. In this case $\fp$ has exactly two elements of order $4$, namely, the roots of $X^2+1$.
Hence, possible linear factors of $X^n+1$ over $\fp$ are $X-\alpha$ and $X+\alpha$ where $\alpha^2+1=0$
when $p=1\mod 4$.

For the rest of this section, we consider only the quadratic extension $\fps=\fpe$. Recall that a stabilizer
negacyclic code is linear \ifif the associated totally isotropic subspace
is an ideal of $\rn(\eta)=\fpe[X]/\ideal{X^n+1}$.
The following theorem gives a necessary condition for a simultaneously negacyclic subspace of $\fpns$ to be a
totally isotropic ideal of $\rn(\eta)$:
\begin{theorem}\label{ncflc}
 Let $\fp(\eta)$ be a quadratic extension of $\fp$.
 Let $S$ be a totally isotropic ideal, associated to a linear $t$-Frobenius negacyclic code,
 of $\rn(\eta)=\fp(\eta)[X]/\ideal{X^n+1}$,
 such that the generator $g(X)$ of the ideal in $\rn$ corresponding to $F:=\{\ba:(\ba,\bb)\in S\}$
 satisfies $g(-X)=g(X)$. Then the ideal $S$
 is generated by the product polynomial $g(X)h(X,\eta)$ where $h(X,\eta)$ is such a factor of $X^n+1$ that
 $g(X)$ and $h(X,\eta)$ are coprime, and the following holds:
 \begin{enumerate}[{\rm (1)}]
  \item $g(X)$ contains all possible linear factors of $X^n+1$ over $\fp$ as factors.
  \item $h(X,\eta)$ is a factor of $\frac{X^n+1}{g}$ over $\fp(\eta)$, such that for any irreducible factor
  $r(X,\eta)$ of $X^n+1$ over $\fpe$, exactly one of $r$ and $\sigma(r)$ divide $h$.
 \end{enumerate}
 In addition, if the factor $h(X,\eta)$ satisfies the condition $h(-X,\eta)=h(X,\eta)$, then $t$ must be even.
\end{theorem}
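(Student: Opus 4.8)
The plan is to translate every hypothesis into a statement about the root sets of the relevant polynomials, working in a splitting field of $X^n+1$. Fix a primitive $2n$-th root of unity $\beta$; since $\frac{p^t+1}{n}$ is odd and $p$ is odd, $n$ is even, so the roots of $X^n+1$ are exactly the $\beta^j$ with $j$ odd modulo $2n$, and I identify each divisor of $X^n+1$ with the set of exponents $j$ at which it vanishes. On the set of odd residues modulo $2n$ I record three permutations: the Frobenius $\phi\colon j\mapsto pj$, the inversion $\iota\colon j\mapsto -j$ (so $q(X^{-1})$ has root set $\iota$ of that of $q$), and the negation $\tau\colon j\mapsto j+n$ coming from $X\mapsto -X$. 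The computation $p^t\equiv n-1\pmod{2n}$, already used in Lemma~\ref{fl}, gives the identity $\phi^t=\tau\iota$ on odd residues, which is what ties the parity of $t$ to the geometry of the root sets. One checks routinely that the root sets $\mathcal{R}_g$ of $g$ and $T$ of $\frac{X^n+1}{g}$ are stable under all three maps.

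First I would encode condition~(2). Writing $T$ for the root set of $\frac{X^n+1}{g}$ (the complement of $\mathcal{R}_g$), the requirement that $h$ contain exactly one of each pair $\{r,\sigma(r)\}$ says precisely that the root set $\mathcal{R}_h$ satisfies the disjoint union $\mathcal{R}_h\sqcup\phi(\mathcal{R}_h)=T$. Intersecting with a single $\phi$-orbit $\mathcal{O}\subseteq T$ of length $\ell$ and putting $A=\mathcal{R}_h\cap\mathcal{O}$, the relation $A\sqcup\phi(A)=\mathcal{O}$ forces $\ell$ to be even and $A$ to be one of the two ``alternating'' transversals of the cycle $\mathcal{O}$. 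This structural fact is what I will play against the parity of $t$ at the end.

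The hard part is to show that the totally isotropic hypothesis forces $\iota(\mathcal{R}_h)=\mathcal{R}_h$; this is where linearity and the Frobenius over $\fp$ enter. Since $g(-X)=g(X)$, Lemma~\ref{lguc} makes $S$ uniquely negacyclic with generating pair $(g,f)$, and $f\in\fp[X]$. From $g+\eta f\in S=\ideal{gh}$ I read off, at each root, that $f(\beta^j)/g(\beta^j)=-\eta^{-1}$ for $j\in\mathcal{R}_h$; applying the $\fp$-rationality relation $f(\beta^{pj})=f(\beta^j)^p$ (and the same for $g$) then gives $f(\beta^j)/g(\beta^j)=-\eta^{-p}$ for $j\in\phi(\mathcal{R}_h)$. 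Now Proposition~\ref{icip}, in the form $g(X)f(X^{-1})=f(X)g(X^{-1})$, says exactly that the function $j\mapsto f(\beta^j)/g(\beta^j)$ on $T$ is $\iota$-invariant; since $-\eta^{-1}\neq-\eta^{-p}$ (as $\eta\notin\fp$), its two level sets $\mathcal{R}_h$ and $\phi(\mathcal{R}_h)$ must each be $\iota$-invariant, giving $\iota(\mathcal{R}_h)=\mathcal{R}_h$. I expect the bookkeeping pinning down $f$ on $\phi(\mathcal{R}_h)$, and the check that the two values are genuinely distinct, to be the only delicate points.

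Finally I would assemble the conclusion. The hypothesis $h(-X,\eta)=h(X,\eta)$ says $\tau(\mathcal{R}_h)=\mathcal{R}_h$, so using $\phi^t=\tau\iota$ with $\iota(\mathcal{R}_h)=\mathcal{R}_h$ I obtain $\phi^t(\mathcal{R}_h)=\tau\iota(\mathcal{R}_h)=\tau(\mathcal{R}_h)=\mathcal{R}_h$. Restricted to an orbit $\mathcal{O}$, the map $\phi^t$ is the shift by $t$, and an alternating transversal of an even cycle is preserved by a shift if and only if the shift is even; hence $t$ must be even. Since $h$ is a nonconstant factor, $T$ is nonempty and at least one such orbit occurs, so the argument is not vacuous. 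This completes the plan.
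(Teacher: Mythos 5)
Your plan, as written, proves only the final sentence of the theorem and silently assumes everything before it. The step ``from $g+\eta f\in S=\ideal{gh}$ I read off\dots'' is circular: that $S$ is generated by a product $g(X)h(X,\eta)$ with $h$ a factor of $\frac{X^n+1}{g}$ coprime to $g$ is itself the first conclusion to be proved, and it is not automatic --- one must show $g\mid f$, and the only access to that is the ideal hypothesis. Since $\eta(g+\eta f)\in S$ and $S$ is uniquely negacyclic (Lemma~\ref{lguc}, Proposition~\ref{gpfuns}), one gets $\eta(g+\eta f)=a(g+\eta f)\mxn$ for some $a\in\fp[X]$, whence $f=-\frac{a}{c_0}g$ and $c(a)=0\mod\frac{X^n+1}{g}$, where $c(X)=X^2+c_1X+c_0$ is the minimal polynomial of $\eta$. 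From this relation the paper extracts everything your plan takes for granted: part (1), because each irreducible factor $r$ of $\frac{X^n+1}{g}$ over $\fp$ makes $a\bmod r$ a root of $c$ in $\fp[X]/\ideal{r}$, forcing $\deg r$ to be even; the factorization, via $g+\eta f=g\left(1-\frac{\eta}{c_0}a\right)$ and $h=\gcd\left(\frac{X^n+1}{g},1-\frac{\eta}{c_0}a\right)$; and part (2), because modulo each irreducible $\tilde r$ over $\fpe$ one has $a\equiv\eta$ or $\eta'$, and $\sigma(a)=a$ forces exactly one of $\tilde r,\sigma(\tilde r)$ to divide $h$. Your second paragraph only \emph{translates} condition (2) into the partition $\mathcal{R}_h\sqcup\phi(\mathcal{R}_h)=T$; it never derives it. Your third paragraph then leans on exactly this unproved partition when it declares $\mathcal{R}_h$ and $\phi(\mathcal{R}_h)$ to be ``the two level sets'' of the ratio $\rho(j)=f(\beta^j)/g(\beta^j)$ on $T$: without (2), and without $f=-\frac{a}{c_0}g$, nothing in your argument controls the values of $\rho$ off $\mathcal{R}_h\cup\phi(\mathcal{R}_h)$, and the level-set step collapses.

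Granted the structural facts, your parity argument is correct and genuinely different from the paper's finish. The paper combines $h(-X,\eta)=h(X,\eta)$ with isotropy to obtain $a^{p^t}(X)=a(X)\mod\frac{X^n+1}{g}$ and then derives a field-theoretic contradiction for odd $t$ (namely $a\bmod r$ would be forced into $\fp$, contradicting irreducibility of $c$ over $\fp$). You instead convert isotropy into $\iota$-invariance of $\mathcal{R}_h$ --- the computation $\rho(pj)=\rho(j)^p$ and $-\eta^{-1}\neq-\eta^{-p}$ for $\eta\notin\fp$, $p$ odd, is fine --- convert the hypothesis on $h$ into $\tau$-invariance, combine via $\phi^t=\tau\iota$ (correct: $\frac{p^t+1}{n}$ odd gives $p^t\equiv n-1\pmod{2n}$), and conclude with the observation that a shift preserves an alternating transversal of an even cycle only if the shift is even; your nonvacuousness remark handles $T\neq\emptyset$. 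This is sound and arguably more transparent than the paper's field-theoretic step. To turn the proposal into a proof of the theorem, insert the ideal-relation computation above; note that it also repairs your level-set step directly, since evaluating $\eta(g+\eta f)=a(g+\eta f)$ at $\beta^j$ with $(g+\eta f)(\beta^j)\neq0$ yields $a(\beta^j)=\eta$, hence $\rho(j)=-\frac{\eta}{c_0}=-\eta^{-p}$ there, which establishes the two-valued structure of $\rho$ on all of $T$ and, with it, condition (2).
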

\begin{proof}
 Let $\eta$ be a root of the irreducible polynomial $c(X)=X^2+c_1X+c_0$ over $\fp$. So, $\eta^2=-c_1\eta-c_0$.
 On applying Lemma~\ref{lguc}, we conclude that $S$ is uniquely cyclic.
 Let $(g,f)$ be the generating pair for $S$.
 In particular, $g+\eta f$ belongs to $S$. Since $S$ is an ideal of $\rn(\eta)$, $\eta(g+\eta f)$ is
 also an element of $S$. From Proposition~\ref{gpfuns}, it follows that there exists $a(X)\in\fp[X]$
 such that
 \begin{equation}
  \eta(g+\eta f)=a(g+\eta f) \mxn
 \end{equation}
 Because $\eta^2=-c_1\eta-c_0$, we get,
 \begin{equation}
  -c_0f+\eta(g-c_1f)=ag+\eta af \mxn.
 \end{equation}
 Comparing the coefficients of $\eta$, we obtain
 \begin{align}
  & f=-\frac{a}{c_0}g \mxn \label{fee}\\
  \mbox{and}\quad & af+c_1f-g=0 \mxn\label{see}.
 \end{align}
 Substituting the value of $f$ from the equation \eqref{fee} in equation \eqref{see} and using the fact
 that $g$ is invertible modulo $\frac{X^n+1}{g}$, we obtain
 \begin{equation}\label{eta}
  c(a(X))=0\mod\frac{X^n+1}{g}.
 \end{equation}
 Let $r(X)$ be any irreducible factor of $\frac{X^n+1}{g(X)}$ over $\fp$. Let $\mathbb{K}$ be the splitting
 field $\fp[X]/\ideal{r(X)}$ of $r(X)$ over $\fp$. From equation \eqref{eta}, it follows that $a\mod r$
 is a root of $c(X)$ in $\mathbb{K}$ and hence $c(X)$ splits over $\mathbb{K}$. Thus $\mathbb{K}$ contains
 the quadratic extension $\fpe$. Therefore, the degree of the extension $\mathbb{K}/\fp$ is even and
 thus equal
 to the degree of the polynomial $r(X)$. This shows that any irreducible factor of $\frac{X^n+1}{g}$ over
 $\fp$ must be of even degree and as a result $g$ should contain all the odd degree factors of $X^n+1$ over $\fp$
 as factors.
 From Lemma~\ref{fl} it follows that, the possible linear factors are the only odd degree factors of
 $X^n+1$ over $\fp$. So $g$ should contain all possible linear factors of $X^n+1$ over $\fp$ as factors.
 
 Let $h(X,\eta)$ be the polynomial $\gcd\left(\frac{X^n+1}{g},1-\frac{\eta}{c_0}a\right)$ over $\fpe$.
 This $h(X,\eta)$ is a factor
 of $X^n+1$ over $\fpe$ and is coprime to $g(X)$. From Proposition~\ref{gpfuns}, we infer that, $g+\eta f$
 is a generator of $S$ as an ideal of $\rn(\eta)$. Therefore, $\gcd(X^n+1,g+\eta f)=\gcd(X^n+1,g-
 \frac{\eta}{c_0}ag)=g\gcd\left(\frac{X^n+1}{g},1-\frac{\eta}{c_0}a\right)=gh$ is also a generator of $S$. We next
 show that $h(X,\eta)$ satisfies statement (2) of the theorem.
 If $r(X)$ is a factor of $\frac{X^n+1}{g(X)}$ over $\fp$,
 then from the first part of the proof we know that $r(X)$ is of even degree.
 In Lemma~\ref{fl} we observed that such polynomial $r(X)$ factorizes
 as $r=\tilde{r}\sigma(\tilde{r})$, where $\tilde{r}(X,\eta)$ is an irreducible polynomial over $\fpe$.
 But $a\mod\tilde{r}$ in $\fpe[X]/\ideal{\tilde{r}(X,\eta)}$ is a root of $c(Y)$. Therefore,
 $a\mod\tilde{r}$ is either $\eta$ or $\eta'$, where $\eta'=\sigma(\eta)$. It is immediate that,
 $\tilde{r}$ divides $h$ \ifif $1-\frac{\eta}{c_0}a=0\mod\tilde{r}$. This holds \ifif
 $a=\eta'\mod\tilde{r}$, since $\eta\eta'=c_0$. Similarly, $\sigma(\tilde{r})$ divides $h$ \ifif
 $a=\eta'\mod\sigma(\tilde{r})$. Now, $a=\eta'\mod\sigma(\tilde{r})$ \ifif $\sigma(a)=
 \sigma(\eta')\mod\sigma^2(\tilde{r})$. Also we have, $\sigma(a)=a$ and $\sigma^2$ is the identity
 on $\fpe$. Thus, $\sigma(\tilde{r})$ divides $h$ \ifif $a=\eta\mod\tilde{r}$. Therefore, exactly
 one of $\tilde{r}$ and $\sigma(\tilde{r})$ divides $h$ depending on whether $a=\eta'$ or $\eta$ modulo
 $\tilde{r}$ respectively. This proves the statement (2) of the theorem.
 
 Next, assume that $h$ satisfies the property $h(-X,\eta)=h(X,\eta)$. From the second part of the proof,
 we have $h(X,\eta)=\gcd\left(\frac{X^n+1}{g},1-\frac{\eta}{c_0}a\right)$. Therefore, in particular $h$ divides
 $1-\frac{\eta}{c_0}a$. Thus, $a=\eta'\mod h$ and by applying $\sigma$ we obtain $a=\eta\mod \sigma(h)$.
 Now, $a(-X)=\eta'\mod h(-X,\eta)=\eta'\mod h(X,\eta)$ and similarly, $a(-X)=\eta\mod \sigma(h(X,\eta))$.
 Thus, by combining we have $a(-X)=a(X)\mod\frac{X^n+1}{g}$, because $h$ and $\sigma(h)$ are relatively prime
 and $\frac{X^n+1}{g}=h\sigma(h)$. Since $S$ is totally isotropic and $(g,f)\in S$,
 $g(X)f(X^{-1})=f(X)g(X^{-1}) \mxn$. Using equation \eqref{fee} and Proposition~\ref{pt+1} we obtain, 
 $g^{p^t+1}(X)a^{p^t}(-X)$ $=g^{p^t+1}(X)a(X) \mxn$, since $g(-X)=g(X)$. Because $g$ is invertible
 modulo $\frac{X^n+1}{g}$, $a^{p^t}(-X)=a(X)\mod\frac{X^n+1}{g}$. Therefore,
 \begin{equation}\label{apta}
  a^{p^t}(X)=a(X)\mod\frac{X^n+1}{g}
 \end{equation}
 As in first part of the proof, let $r(X)$ be any irreducible factor
 of $\frac{X^n+1}{g}$ and $\mathbb{K}$ be the field $\fp[X]/\ideal{r(X)}$. Then from equation \eqref{eta}
 $a\mod r$ is a root of $c(Y)$ in $\mathbb{K}$. If possible suppose, $t=2m+1$ for some positive
 integer $m$. We remark that any extension field $\mathbb{F}_{p^{2m}}/\fp$ contains $\fpe$. Thus $c(Y)$
 divides $Y^{p^{2m}}-Y$. Therefore, $a^{p^{2m}}=a\mod r$. Thus from equation \eqref{apta}, we derive that
 $a^p=a\mod r$. Therefore, $a\mod r$ is an element of $\fp\subset\mathbb{K}$. This is a
 contradiction, because $c$ is irreducible over $\fp$. Hence in this case, $t$ must be even.
 This completes the proof.
  
\end{proof}

\section{Construction of Linear and Nonlinear Codes}
\label{CLNlC}
Finally in this section we establish certain sufficient conditions to obtain $t$-Frobenius negacyclic codes and
study the BCH distance for such codes.
\begin{theorem}\label{gnqc}
 Let $n$ divide $p^{km}+1$ for some positive integers $k$ and $m$, where $\frac{p^{km}+1}{n}$ is an odd
 integer. Let $\fpe$ be a degree $k$ extension of $\fp$. Let $g(X)$ and $h(X,\eta)$ be coprime factors of
 $X^n+1$ satisfying the following properties:
 \begin{enumerate}[{\rm (1)}]
  \item $g(X)$ is any factor of $X^n+1$ over $\fp$ such that $g(-X)=g(X)$ and $g(X)$ contains all the irreducible
  factors of $X^n+1$ over $\fp$ whose degree is not divisible by $k$.
  \item $h(X,\eta)$ is any factor of $\frac{X^n+1}{g}$ over $\fpe$ such that $h(-X,\eta)=h(X,\eta)$ and for any
  irreducible factor $r(X,\eta)$ of $\frac{X^n+1}{g}$ over $\fpe$, $r(X,\eta)$ divides $h(X,\eta)$ \ifif
  the factors $\sigma^i(r)$ does not divide $h$ for $i=1,2,\ldots,k-1$, i.e., $\frac{X^n+1}{g}=
  \prod_{i=0}^{k-1}\sigma^i(h)$.
 \end{enumerate}
 Choose any nonzero $\alpha$ in $\fp$ and let $a(X,\eta)$ be the polynomial, uniquely defined by Chinese
 remaindering, as follows:
 \begin{equation*}
  a=\begin{cases}
     1 & \mod g,\\
     \sigma^i(\alpha\eta) & \mod\sigma^i(h) \mbox{ for all $0\leqslant i<k$}.
    \end{cases}
 \end{equation*}
 Then $a(X,\eta)\in\fp[X]$ and the uniquely negacyclic subspace generated by $(g,ag)$ is
 totally isotropic.
\end{theorem}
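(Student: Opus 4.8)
The statement bundles two assertions — that $a$ actually lies in $\fp[X]$, and that the subspace it generates is totally isotropic — and I would prove them in that order, since the isotropy argument repeatedly invokes Proposition~\ref{pt+1}, which is only available for polynomials with coefficients in $\fp$. Set $t=km$, so that $p^t=(p^k)^m$ and $\fpe=\mathbb{F}_{p^k}$. For the first assertion the plan is to show that $a$ is fixed by the Frobenius involution $\sigma$; as $\fp$ is precisely the fixed field of $\sigma$ on $\fpe$, this puts every coefficient of $a$ into $\fp$. Applying $\sigma$ to the defining congruences, $\sigma(g)=g$ (since $g\in\fp[X]$) while $\sigma$ permutes the moduli cyclically, $\sigma^i(h)\mapsto\sigma^{i+1}(h)$ with indices read mod $k$ (because $\sigma^k=\mathrm{id}$ on $\fpe$). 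Using $\sigma(\alpha)=\alpha$ one gets $\sigma(a)\equiv 1\pmod g$ and $\sigma(a)\equiv\sigma^{i+1}(\alpha\eta)\pmod{\sigma^{i+1}(h)}$, i.e.\ $\sigma(a)$ solves exactly the same Chinese Remainder system modulo $X^n+1=g\prod_i\sigma^i(h)$ that defines $a$. Uniqueness of the CRT representation then forces $\sigma(a)=a$, hence $a\in\fp[X]$.

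By Proposition~\ref{gpfuns} the subspace in question is $S=\{(cg,\,c\,ag):c\in\rn\}$ with generating pair $(g,ag)$, and it is uniquely negacyclic because $cg\equiv 0$ forces $c\,ag=a(cg)\equiv 0\mxn$. By Proposition~\ref{icip}, $S$ is totally isotropic \ifif $g(X)(ag)(X^{-1})=(ag)(X)g(X^{-1})\mxn$, which rearranges to
\begin{equation*}
 g(X)g(X^{-1})\bigl[a(X^{-1})-a(X)\bigr]=0\mxn.
\end{equation*}
Here I would apply Proposition~\ref{pt+1}: since $g\in\fp[X]$ and $g(-X)=g(X)$, it gives $g(X^{-1})=g^{p^t}(-X)=g(X)^{p^t}$, so $g(X)g(X^{-1})=g(X)^{p^t+1}$, which is a unit modulo $\frac{X^n+1}{g}$ (as $X^n+1$ is squarefree and $g$, $\frac{X^n+1}{g}$ are coprime). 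The congruence modulo $g$ is automatic, so the isotropy condition reduces to
\begin{equation*}
 a(X^{-1})\equiv a(X)\pmod{\tfrac{X^n+1}{g}}.
\end{equation*}

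I would verify this last congruence factor by factor via CRT, using $\frac{X^n+1}{g}=\prod_{i=0}^{k-1}\sigma^i(h)$. Fix $i$. From $h(-X,\eta)=h(X,\eta)$ each $\sigma^i(h)$ is even (invariant under $X\mapsto -X$), and since $a\equiv\sigma^i(\alpha\eta)\pmod{\sigma^i(h)}$ is a \emph{constant}, substituting $X\mapsto -X$ leaves the residue unchanged, so $a(-X)\equiv\sigma^i(\alpha\eta)\pmod{\sigma^i(h)}$. Proposition~\ref{pt+1} gives $a(X^{-1})=a^{p^t}(-X)=\bigl(a(-X)\bigr)^{p^t}$, and the constant residue $\sigma^i(\alpha\eta)$ lies in $\fpe=\mathbb{F}_{p^k}$, which is fixed by the $p^t=(p^k)^m$-power map; hence $a(X^{-1})\equiv\bigl(\sigma^i(\alpha\eta)\bigr)^{p^t}=\sigma^i(\alpha\eta)\equiv a(X)\pmod{\sigma^i(h)}$. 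As this holds for every $i$, CRT yields the required congruence and the isotropy follows.

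The real work — and the place where the hypotheses are essential — is this core congruence, where three maps on $\rn(\eta)$ must be reconciled: the coefficient Frobenius $\sigma$ (order $k$), the reflection $X\mapsto -X$ (tamed by the evenness of $g$ and $h$), and the inversion $X\mapsto X^{-1}=-X^{p^t}$ of Proposition~\ref{pt+1}. The decisive quantitative point is that $p^t$ is a \emph{power} of $p^k$, i.e.\ $t=km$: this is exactly what makes the residues $\sigma^i(\alpha\eta)\in\mathbb{F}_{p^k}$ invariant under the $p^t$-power map, and it is why the theorem is stated with $n\mid p^{km}+1$ rather than merely $n\mid p^t+1$. I would also note at the outset that $n$ is even here (forced by $\frac{p^{km}+1}{n}$ being odd), which is what makes the evenness hypotheses on $g$ and $h$ compatible with their dividing $X^n+1$.
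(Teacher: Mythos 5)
Your proposal is correct and follows essentially the same route as the paper's own proof: establish $\sigma(a)=a$ via uniqueness in the Chinese Remainder system, use the evenness of $g$ and of the $\sigma^i(h)$ together with Proposition~\ref{pt+1} and the fact that the $p^{km}$-power map fixes $\mathbb{F}_{p^k}$ to obtain $a(X^{-1})=a(X)$, and then reduce the isotropy criterion of Proposition~\ref{icip} to that congruence after writing $g(X^{-1})=g^{p^{km}}(X)$. The only cosmetic difference is that the paper proves $a(X^{-1})=a(X)$ modulo the full $X^n+1$ (checking the modulus $g$ as well), whereas you prove it only modulo $\frac{X^n+1}{g}$ and absorb the $g$-part into the invertible factor $g^{p^{km}+1}$ --- an equally valid bookkeeping choice.
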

\begin{proof}
 Let $f(X)$ denote the polynomial $\frac{X^n+1}{g(X)}$. Clearly $f=\prod_{i=0}^{k-1}\sigma^i(h)$ over $\fpe$.
 Assume $\alpha$ and $a(X,\eta)$ is chosen as stated in the theorem. The following three claims lead us to
 the proof of the theorem:
 \begin{claim}
  The polynomial $a(X,\eta)$ is a polynomial over $\fp$ and $a(-X,\eta)=a(X,\eta)$.
 \end{claim}
 \proof It is enough to show that $\sigma(a)=a$. Since $\sigma(g)=g$ and $a=1\mod g$, we obtain
 $\sigma(a)=a\mod g$. From the assumption of the theorem, we have $a=\sigma^i(\alpha\eta)
 \mod\sigma^i(h)$ for all $0\leqslant i<k$. Therefore, $\sigma(a)=\sigma^{i+1}(\alpha\eta)
 \mod\sigma^{i+1}(h)$ for all $0\leqslant i<k$. Since $\sigma^k$ is the identity on $\fpe$, $\sigma(a)=
 \sigma^i(\alpha\eta)\mod\sigma^i(h)$ for all $0\leqslant i<k$. Thus $\sigma(a)=a\mod\sigma^i(h)$
 for all $0\leqslant i<k$. The first part of the claim follows, since $g$ and $\sigma^i(h)$ for all
 $0\leqslant i<k$ are pairwise coprime.
 
 For the second part also, we verify the statement modulo $g$ and $\sigma^i(h)$ for all $0\leqslant i<k$
 separately. Since $g(-X)=g(X)$ and $a(X)=1\mod g(X)$, we obtain
 $a(-X)=a(X)\mod g(X)$. Since $h(-X,\eta)=h(X,\eta)$, it follow that
 $\sigma^i(h(-X,\eta))=\sigma^i(h(X,\eta))$.
 Therefore, $a(-X,\eta)=\sigma^i(\alpha\eta)\mod\sigma^i(h(X,\eta))$ for all $0\leqslant i<k$, i.e.,
 $a(-X,\eta)=a(X,\eta)\mod\sigma^i(h(X,\eta))$ for all $0\leqslant i<k$. Hence the second part follows.
 \begin{claim}\label{axi=ax}
  Since $a(X,\eta)$ is a polynomial over $\fp$, we write it as $a(X)$. Then $a(X^{-1})=a(X)\mxn$.
 \end{claim}
 \proof From Proposition~\ref{pt+1}, we have $a(X^{-1})=a^{p^{km}}(-X)\mxn$. Using second part of the
 previous claim, we infer that $a(X^{-1})=a^{p^{km}}(X)\mxn$. From the definition of $a$, it follows
 that $a^{p^{km}}=
 (\sigma^i(\alpha\eta))^{p^{km}}\mod\sigma^i(h)$ for all $0\leqslant i<k$, i.e., $a^{p^{km}}=
 \sigma^i(\alpha\eta) \mod\sigma^i(h)$ for all $0\leqslant i<k$. Similarly, $a^{p^{km}}=1\mod g$.
 Hence, $a(X^{-1})=a(X)\mxn$.
 \begin{claim}
  The uniquely negacyclic subspace generated by $(g,ag)$ is totally isotropic.
 \end{claim}
 \proof From Proposition~\ref{icip}, it is sufficient to prove
 \begin{equation}
  g(X)a(X^{-1})g(X^{-1})-a(X)g(X)g(X^{-1})=0 \mxn,
 \end{equation}
 i.e., it is sufficient to prove
 \begin{equation}
  g^{p^{km}+1}(X)a(X^{-1})=g^{p^{km}+1}(X)a(X)\mxn.
 \end{equation}
 This holds because of the previous claim, i.e., because $a(X^{-1})=a(X)\mxn$.
 
 This completes the proof of the theorem.
  
\end{proof}

If $S$ is the uniquely negacyclic totally isotropic subspace generated by $(g,ag)$ as in Theorem~\ref{gnqc},
then the corresponding stabilizer code $\qc(\s)$ is a $km$-Frobenius negacyclic code.
We refer to the product $g(X)\cdot h(X,\eta)$ of polynomials obtained in Theorem~\ref{gnqc} for any
$km$-Frobenius negacyclic code as the {\it canonical factorization} associated to the code.
The next theorem is the converse of Theorem~\ref{ncflc}.

\begin{theorem}\label{lcfgc}
 Suppose $c(X)=X^2+c_1X+c_0$ is an irreducible polynomial over $\fp$ and $\eta,\eta'$ be roots of $c(X)$. Set
 $k=2$, the quadratic extension $\fp(\eta)=\fp(\eta')$ and $\alpha
 =-c_0^{-1}$ in Theorem~\ref{gnqc}. Let $S$ be the corresponding totally isotropic subspace. Then the
 image of $S$ under the map $(u,v)\mapsto u+\eta v$ is an ideal of $\rn(\eta)$ and is generated by
 $g(X)h(X,\eta)$, where $g$ and $h$ satisfies the properties of Theorem~\ref{ncflc}. Further, the associated
 stabilizer code $\qc(\s)$ is a linear negacyclic code and
 the dual $S^{\perp}$ of $S$ \wrt the symplectic inner product also maps to an
 ideal generated by $h(X,\eta)$.
\end{theorem}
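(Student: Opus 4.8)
The plan is to read off the structure of $S$ from Theorem~\ref{gnqc} and then invoke the linearity criterion of Proposition~\ref{lei}. By Theorem~\ref{gnqc}, with $k=2$ and $\alpha=-c_0^{-1}$, the subspace $S$ is uniquely negacyclic, totally isotropic, and has generating pair $(g,ag)$ with $a$ the polynomial defined there; hence by Proposition~\ref{gpfuns}, $S=\{(cg,cag):c\in\rn\}$, and under $(u,v)\mapsto u+\eta v$ its image is the set of $\fp[X]$-multiples of the single element $g(1+\eta a)$. First I would reduce $g(1+\eta a)$ modulo each of the three pairwise coprime factors in $X^n+1=g\,h\,\sigma(h)$ over $\fpe$. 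From the defining congruences $a\equiv 1\pmod g$, $a\equiv\alpha\eta\pmod h$, $a\equiv\alpha\eta'\pmod{\sigma(h)}$, together with $\eta\eta'=c_0$, one gets $\eta a\equiv\alpha\eta\eta'=\alpha c_0=-1\pmod{\sigma(h)}$, so $1+\eta a\equiv 0\pmod{\sigma(h)}$, whereas modulo $h$ the element $1+\eta a$ equals the nonzero constant $2+c_0^{-1}c_1\eta$ (a unit, since $p$ is odd). Consequently $\gcd(g(1+\eta a),X^n+1)=g\,\sigma(h)$, i.e.\ the ideal generated by $g(1+\eta a)$ in $\rn(\eta)$ is $\langle g\,\sigma(h)\rangle$.

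The image of $S$ lies in that ideal; to see it fills it I would compare $\fp$-dimensions. The generating-pair description gives $\dim_{\fp}S=n-\deg g$, while from $X^n+1=g\,h\,\sigma(h)$ we have $\deg h=\deg\sigma(h)=(n-\deg g)/2$, whence $\dim_{\fp}\langle g\,\sigma(h)\rangle=2\bigl(n-\deg(g\,\sigma(h))\bigr)=n-\deg g$ as well. Equality forces the image of $S$ to be exactly $\langle g\,\sigma(h)\rangle$, so in particular it is an ideal of $\rn(\eta)$. Writing $h$ for what is here $\sigma(h)$ is harmless, since property~(2) of Theorem~\ref{gnqc} is symmetric under $h\leftrightarrow\sigma(h)$, so $\sigma(h)$ again satisfies condition~(2) of Theorem~\ref{ncflc}; condition~(1) there also holds because $g(-X)=g(X)$ and, for $k=2$, the factors of $X^n+1$ of degree not divisible by $k$ are precisely the odd-degree ones, which by Lemma~\ref{fl} are the linear factors that $g$ contains. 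This gives the generator $g(X)h(X,\eta)$. Finally, since the image of $S$ is an ideal and $S$ is simultaneously negacyclic and totally isotropic, Proposition~\ref{lei} yields that $\qc(\s)$ is a linear negacyclic code and that $S^{\perp}$ is again an ideal of $\rn(\eta)$.

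For the dual I would apply the second part of Proposition~\ref{icip}: with generating pair $(g,ag)$, a pair $(u,v)$ lies in $S^{\perp}$ iff $u(X)\,a(X^{-1})g(X^{-1})=v(X)g(X^{-1})\pmod{X^n+1}$. Using Proposition~\ref{pt+1} with $t=km$ and $g(-X)=g(X)$ to write $g(X^{-1})=g^{p^{km}}(X)$, and the identity $a(X^{-1})=a(X)$ from Claim~\ref{axi=ax}, this reduces to $(ua-v)\,g^{p^{km}}\equiv 0\pmod{X^n+1}$, i.e.\ to the single congruence $v\equiv ua\pmod{(X^n+1)/g}$ with no constraint modulo $g$. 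Reducing the image $w=u+\eta v$ modulo $g$, $h$ and $\sigma(h)$ and using $\alpha=-c_0^{-1}$ and $\eta\eta'=c_0$ once more, I would find $w$ free modulo $g$ (as $u,v$ are unconstrained there), free modulo $h$ (where $w\equiv(2+c_0^{-1}c_1\eta)\,u$ and $\fp[X]\to\fpe[X]/\langle h\rangle$ is onto), and $w\equiv u(1+\eta a)\equiv 0\pmod{\sigma(h)}$; hence $S^{\perp}$ maps onto $\langle\sigma(h)\rangle=\langle h\rangle$ under the same relabelling, consistent with the generator of $S$ being $gh$. A dimension count, $\dim_{\fp}S^{\perp}=n+\deg g=\dim_{\fp}\langle\sigma(h)\rangle$, confirms there is no slack. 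The step I expect to be the main obstacle is the modular bookkeeping of $1+\eta a$ (and of $v-ua$) across the conjugate pair $h,\sigma(h)$: everything turns on the cancellation $\eta\cdot\alpha\eta'=\alpha c_0=-1$, which is exactly what pins down $\alpha=-c_0^{-1}$, and on tracking which factor of each $\sigma$-orbit survives so that the relabelling is the same for $S$ and for $S^{\perp}$.
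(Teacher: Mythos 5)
Your proof is correct, and its skeleton is the paper's own: specialize Theorem~\ref{gnqc} with $k=2$ and $\alpha=-c_0^{-1}$, reduce $1+\eta a$ modulo the pairwise coprime factors in $X^n+1=g\,h\,\sigma(h)$ using the cancellation $\eta\cdot\alpha\eta'=\alpha c_0=-1$, identify the generator through a gcd computation, and settle the dual via Proposition~\ref{lei} plus a counting argument. You diverge in the mechanics at two points, both legitimately. First, to prove the image of $S$ is an ideal, the paper verifies closure under multiplication by $\eta$ head-on, via the identity $\eta(g+\eta ag)=-c_0a(g+\eta ag)\mxn$ checked separately modulo $g$, $h$ and $\sigma(h)$; you instead trap the $\fp[X]$-span of $g(1+\eta a)$ inside the principal ideal $\ideal{\gcd(g(1+\eta a),X^n+1)}$ and force equality by matching $\fp$-dimensions ($n-\deg g$ on both sides), which buys idealness for free at the cost of the degree bookkeeping $\deg h=\deg\sigma(h)=(n-\deg g)/2$. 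Second, for $S^{\perp}$ the paper exhibits the single element $1+\eta a\in S^{\perp}$ by one symplectic computation and then uses that $S^{\perp}$ is an ideal (Proposition~\ref{lei}) together with cardinalities; you instead derive the full membership criterion $v\equiv ua\bmod\frac{X^n+1}{g}$ from Proposition~\ref{icip} --- in effect re-proving the Claim that the paper only establishes later, inside the nonlinear minimum-distance theorem --- which is more work but yields the explicit description of $S^{\perp}$ as a by-product; note that your ``free modulo $g$'' and ``onto modulo $h$'' assertions are then redundant, since your dimension count alone closes the argument, which is just as well because surjectivity of $\fp[X]\to\fpe[X]/\ideal{h}$ itself requires the one-factor-per-$\sigma$-orbit property of $h$ (it follows from the bijection $\fp[X]/\ideal{h\sigma(h)}\to\fpe[X]/\ideal{h}$, a step you assert without proof). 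Finally, your relabelling remark catches a genuine wrinkle the paper glosses over: read literally, the CRT data of Theorem~\ref{gnqc} gives $a\equiv\alpha\eta\bmod h$, so the generator comes out as $g\,\sigma(h)$, whereas the paper's displayed system \eqref{lca} silently swaps $h$ and $\sigma(h)$ (imposing $a\equiv -c_0^{-1}\eta'\bmod h$) so as to land on $g\,h$ directly; your observation that all hypotheses of Theorems~\ref{gnqc} and \ref{ncflc} are symmetric under $h\leftrightarrow\sigma(h)$ is exactly the needed repair, and you correctly track that the same relabelling serves both $S$ and $S^{\perp}$, keeping the two generators $gh$ and $h$ consistent.
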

\begin{proof}
 From Theorem~\ref{gnqc}, it follows that the polynomial $a(X,\eta)$ defined by
 \begin{equation}\label{lca}
  a=\begin{cases}
     1 & \mod g,\\
     -c_0^{-1}\eta' & \mod h,\\
     -c_0^{-1}\eta & \mod\sigma(h).
    \end{cases}
 \end{equation}
 is a polynomial over $\fp$ and the uniquely negacyclic subspace $S$ generated by $(g,ag)$ is totally
 isotropic. The elements of $S$ are precisely of the form $(ug,uag)$
 where $u\in\rn$. To show $S$ is an ideal of $\rn(\eta)$, it is enough to show $\eta(ug+\eta uag)\in S$,
 since $S$ is uniquely negacyclic. This follows from the following claim.
 \begin{claim}
  $\eta(g+\eta ag)=-c_0a(g+\eta ag)\mxn$.
 \end{claim}
 \proof We verify the statement modulo $g,h$ and $\sigma(h)$ separately.
 Since $g$ is a factor on both the side of the equation in the claim, the equation holds modulo
 $g$. We have $\eta^2=-c_1\eta-c_0$ and $\eta\eta'=c_0$.
 Using the the definition of $a$ as stated in \eqref{lca}, we have
 \begin{align*}
  \eta(g+\eta ag)+c_0a(g+\eta ag)
  &= \eta(g+\eta(-c_0^{-1}\eta')g)+c_0(-c_0^{-1}\eta')(g+\eta(-c_0^{-1}\eta')g) \mod h\\
  &= \eta(g-g)+\eta'(g-g) \mod h=0 \mod h.
 \end{align*}
 Similarly,
 \begin{align*}
  \eta(g+\eta ag)+c_0a(g+\eta ag)
  &= \eta(g+\eta(-c_0^{-1}\eta)g)+c_0(-c_0^{-1}\eta)(g+\eta(-c_0^{-1}\eta)g) \mod h\\
  &= \eta(g-\eta^2c_0^{-1}g)-\eta(g-\eta^2c_0^{-1}g) \mod h=0 \mod h.
 \end{align*}
 This proves the claim.
 
 Thus $\eta(ug+\eta uag)=-c_0au(g+\eta ag)\mxn$ belongs to $S$. Hence it
 follows that the image of $S$ under the map $(u,v)\mapsto u+\eta v$ is an ideal of $\rn(\eta)$.
 We now show that $S$ is generated by $g(X)h(X,\eta)$. Since $g+\eta ag$ is a generator of $S$,
 $\gcd(X^n+1,g+\eta ag)=g\gcd\left(\frac{X^n+1}{g},1+\eta a\right)$ is also a generator.
 We show that $h(X,\eta)=\gcd\left(\frac{X^n+1}{g},1+\eta a\right)$.
 For this first observe that for any irreducible factor $r(X,\eta)$ of
 $\frac{X^n+1}{g}$ over $\fpe$,
 \begin{equation}\label{mlca}
  a\mod r=\begin{cases}
           -\eta^{-1} & \mbox{ if } r|h,\\
           -c_0^{-1}\eta & \mbox{ if } r|\sigma(h).
          \end{cases}
 \end{equation}
 Therefore, if $r$ divide $h$, then $1+\eta a=0\mod r$. Also if $r$ does not divide $h$, then it divide
 $\sigma(h)$ and in this case $1+\eta a=(2+\frac{c_1}{c_0}\eta)\mod r\neq0$. Combining this two, we get,
 $r$ divide $h$ \ifif $r$ divide $1+\eta a$. Thus, $\gcd\left(\frac{X^n+1}{g},1+\eta a\right)=h(X,\eta)$.
 
 Let $I$ denote the ideal generated by $1+\eta a$ in $\rn(\eta)$. Since $1+\eta a=(1+\eta)
 \mod r\neq0\mod r$ for any irreducible factor $r$ of $g$, the polynomials $1+\eta a$ and $g$
 are relatively prime. Thus, from the
 same arguments as above it follows that $\gcd(X^n+1,1+\eta a)=h$. Hence $I$ is also generated by $h$.
 We have
 \begin{equation*}
  g(X)a(X^{-1})-a(X)g(X)=0\mxn,
 \end{equation*}
 since $a(X^{-1})=a(X)\mxn$. Thus, the symplectic inner product of $(\boldsymbol{1},\ba)$ and
 $(\boldsymbol{g},\ba\boldsymbol{g})$ is $0$ and hence
 $1+\eta a$ belongs to $S^{\perp}$. It is shown in Proposition~\ref{lei} that $S^{\perp}$ is an
 ideal of $\rn(\eta)$. Therefore, $I\subset S^{\perp}$. We now show $I=S^{\perp}$ by verifying that
 they have same cardinality. Since $S$ is generated by $gh$ as an ideal of $\rn(\eta)$, $\{gh,Xgh,\ldots,
 X^{n-\deg(gh)-1}gh\}$ gives a basis of $S$ as a subspace of $\rn(\eta)$ over $\fpe$. Thus, the cardinality of
 $S$ is $p^{2(n-\deg(gh))}=p^{n-\deg(g)}$, since $\deg(g)+2\deg(h)=n$. Therefore, the dimension of $S$ as a
 subspace of $\fpns$ over $\fp$ is $n-\deg(g)$. Thus, the dimension of $S^{\perp}$ as a subspace of
 $\fpns$ over $\fp$ is $n+\deg(g)$ and hence the cardinality of $S^{\perp}$ is $p^{n+\deg(g)}$.
 Since $I$ is an ideal of $\rn(\eta)$ generated by $h$, the cardinality of
 $I$ is $p^{2(n-\deg(h))}=p^{n+\deg(g)}$. Therefore, $I=S^{\perp}$ and hence it is generated by $h$.
  
\end{proof}

We refer the factorization $g(X)\cdot h(X,\eta)$ in the above theorem as the {\it canonical factorization}
associated to the linear code.
Next we define the BCH distance. Since $X^n+1=\frac{X^{2n}-1}{X^n-1}$, the roots
of $X^n+1$ are the roots of $X^{2n}-1$ which are not roots of $X^n-1$ in some extension of $\fp$. Let $\beta$ be
a primitive $2n$-th root of unity in some extension of $\fp$. So $\beta$ is a root of $X^n+1$, and
$\alpha=\beta^2$ is a primitive $n$-th root of unity. Hence the roots of $X^n+1$ are $\beta\alpha^i=\beta^{1+2i}$,
$0\leqslant i\leqslant n-1$.
\begin{definition}
 Let $q=p^k$ where $p>2$ is a prime number and let $n$ be such that $\gcd(n,q)=1$.
 Let $f(X)$ be a factor of $X^n+1$ over the
 field $\mathbb{F}_q$. The BCH distance of $f(X)$ is defined to be the largest integer $d$ such that
 $\beta^{\ell},\beta^{\ell+2},\beta^{\ell+4},\ldots,\beta^{\ell+2(d-2)}$ are roots of $f(X)$ for some
 $\ell\in\{1,3,5,\ldots,2n-1\}$ and for some primitive $2n$-th root of unity $\beta$.
\end{definition}

We restate below the Lemma~4 of \cite{ks-pcc} in terms of our new definition of BCH distance. It
gives a lower bound of the minimum Hamming distance of a classical
negacyclic code in terms of the generator's BCH distance.
\begin{theorem}\label{ks}
 Let $f(X)$ be a factor of $X^n+1$ over the field $\mathbb{F}_q$. Let $C$ be the $q$-ary classical negacyclic code
 of length $n$ generated by $f(X)$. If $f(X)$ has BCH distance $d$, then the minimum distance of the code $C$ is
 at least $d$.
\end{theorem}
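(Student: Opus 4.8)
The plan is to adapt the classical Vandermonde-determinant proof of the BCH bound to the negacyclic setting. First I would translate membership in $C$ into root conditions: since $C$ is the negacyclic code generated by $f(X)$, a polynomial $c(X)=\sum_{i=0}^{n-1}c_iX^i$ lies in $C$ \ifif $f(X)$ divides $c(X)$ in $\mathbb{F}_q[X]/\ideal{X^n+1}$, which is equivalent to $c(\gamma)=0$ for every root $\gamma$ of $f(X)$. By the definition of the BCH distance there exist an odd $\ell\in\{1,3,\ldots,2n-1\}$ and a primitive $2n$-th root of unity $\beta$ such that $\beta^{\ell},\beta^{\ell+2},\ldots,\beta^{\ell+2(d-2)}$ are all roots of $f(X)$; hence every codeword satisfies $c(\beta^{\ell+2j})=0$ for $j=0,1,\ldots,d-2$.

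Next I would argue by contradiction. Suppose some nonzero codeword $c$ has Hamming weight $w$ with $w\leqslant d-1$, and let $i_1<i_2<\cdots<i_w$ be its nonzero coordinates, so $c_{i_s}\neq0$. Writing $\alpha=\beta^2$, which is a primitive $n$-th root of unity, each vanishing condition expands as
\begin{equation*}
 0=c(\beta^{\ell+2j})=\sum_{s=1}^{w}c_{i_s}\,\beta^{\ell i_s}\,(\alpha^{i_s})^{j},\qquad j=0,1,\ldots,d-2.
\end{equation*}
Retaining only the first $w$ of these $d-1\geqslant w$ equations yields a homogeneous linear system in the unknowns $z_s:=c_{i_s}\beta^{\ell i_s}$ whose coefficient matrix has entries $(\alpha^{i_s})^{j}$ for $0\leqslant j\leqslant w-1$.

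The decisive step is that this coefficient matrix is a Vandermonde matrix in the nodes $\alpha^{i_1},\ldots,\alpha^{i_w}$, whose determinant equals $\prod_{s<t}(\alpha^{i_t}-\alpha^{i_s})$. Because the coordinates $i_1,\ldots,i_w$ are distinct elements of $\{0,1,\ldots,n-1\}$ and $\alpha$ has order exactly $n$, the nodes $\alpha^{i_s}$ are pairwise distinct, so this determinant is nonzero. The system therefore forces $z_s=0$ for every $s$; since $\beta^{\ell i_s}\neq0$, we conclude $c_{i_s}=0$, contradicting the choice of the $i_s$. Hence no nonzero codeword can have weight below $d$, and the minimum distance of $C$ is at least $d$.

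I expect the only genuine subtlety to be the bookkeeping in the middle paragraph: one must isolate the scalar factor $\beta^{\ell i_s}$ arising from the odd shift $\ell$ from the geometric factor $(\alpha^{i_s})^{j}$, so that the $j$-dependence is purely through powers of the distinct nodes $\alpha^{i_s}$ and the Vandermonde structure emerges. Once this separation is made, the negacyclic case reduces exactly to the ordinary cyclic BCH argument, with the role of consecutive powers of a primitive $n$-th root of unity played here by the arithmetic progression $\beta^{\ell},\beta^{\ell+2},\ldots,\beta^{\ell+2(d-2)}$ of odd-indexed $2n$-th roots.
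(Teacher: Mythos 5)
Your proof is correct, but note that there is no internal proof to compare it against: the paper imports this statement, presenting Theorem~\ref{ks} as a restatement of Lemma~4 of \cite{ks-pcc} in terms of its own definition of BCH distance, and gives no argument of its own. What you have written is the standard Vandermonde proof of the BCH bound transported to the negacyclic setting, which is essentially the argument of the cited source. The decisive bookkeeping is handled correctly: the odd shift $\ell$ contributes only the nonzero scalar $\beta^{\ell i_s}$ absorbed into the unknown $z_s$, leaving the $j$-dependence as pure powers of the nodes $\alpha^{i_s}$ with $\alpha=\beta^2$ of exact order $n$, so the nodes are pairwise distinct for distinct $i_s\in\{0,1,\ldots,n-1\}$ and the Vandermonde determinant is nonzero; and the count $d-1\geqslant w$ guarantees enough vanishing conditions to extract a $w\times w$ system. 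The one step you pass over silently is the equivalence of ``$f(X)$ divides $c(X)$'' with ``$c(\gamma)=0$ for every root $\gamma$ of $f$'': this needs $f$ to be squarefree, which holds because the paper's definition of BCH distance assumes $\gcd(n,q)=1$, so $X^n+1$ (and hence $f$) has distinct roots in the splitting field. With that one-line remark added, your argument is a complete, self-contained proof of a fact the paper takes on citation, which is arguably a gain in self-containedness over the paper's treatment.
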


The {\it minimum distance} of a quantum stabilizer code $\qc(\s)$ is the minimum joint weight of
$S^{\perp}\ssm S$ for any totally isotropic subspace $S$.
\begin{definition}
 Let $\qc(\s)$ be a linear negacyclic quantum stabilizer code for any totally isotropic subspace $S\subset\fpns$. The BCH
 distance of $\qc(\s)$ is defined to be the BCH distance of the generator polynomial of $S^{\perp}$.
\end{definition}

\begin{theorem}
 Let the BCH distance of a linear negacyclic quantum stabilizer code $\qc(\s)$ be $d$.
 Then the minimum distance of $\qc(\s)$ is at least $d$.
\end{theorem}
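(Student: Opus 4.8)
The plan is to transport the quantum bound to the classical bound of Theorem~\ref{ks} through the extension-field description set up before Proposition~\ref{lei}. The one genuinely new ingredient is a weight-translation identity, which I would establish first: under the $\fp$-linear isomorphism $(\ba,\bb)\mapsto\ba+\eta\bb$ carrying $\fpns$ onto $\fpen$, the joint weight ${\rm wt}(\ba,\bb)$ equals the Hamming weight of $\ba+\eta\bb$ as a vector over $\fpe$. This is immediate, since $\{1,\eta\}$ is an $\fp$-basis of $\fps=\fpe$, so the $j$-th coordinate $a_j+\eta b_j$ vanishes \ifif $a_j=b_j=0$; thus the support of $\ba+\eta\bb$ in $\fpen$ is exactly $\{j:(a_j,b_j)\neq(0,0)\}$.

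Next I would recall the relevant structure. As $\qc(\s)$ is linear, Proposition~\ref{lei} tells us that its symplectic dual $S^{\perp}$ maps to an ideal of $\rn(\eta)=\fpe[X]/\ideal{X^n+1}$. Since this is a principal ideal ring, $S^{\perp}$ is generated by a factor of $X^n+1$ over $\fpe$, namely the generator polynomial of $S^{\perp}$, and the image of $S^{\perp}$ in $\fpen$ is precisely the classical negacyclic code $C$ over $\fpe$ of length $n$ generated by that polynomial. By definition the BCH distance $d$ of $\qc(\s)$ is the BCH distance of this generator polynomial; hence Theorem~\ref{ks} applies and gives that the minimum Hamming distance of $C$ is at least $d$.

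Finally I would combine the two steps. Every element of $S^{\perp}\ssm S$ is nonzero, because $\ze\in S\subseteq S^{\perp}$, so under the isomorphism it corresponds to a nonzero codeword of $C$, which by the previous paragraph has Hamming weight at least $d$. By the weight-translation identity its joint weight equals that Hamming weight and is therefore at least $d$. Since the minimum distance of $\qc(\s)$ is by definition $\min\{{\rm wt}(\ba,\bb):(\ba,\bb)\in S^{\perp}\ssm S\}$, and each such weight is at least $d$, the minimum distance of $\qc(\s)$ is at least $d$.

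I do not expect a real obstacle in this argument: once the joint support of $(\ba,\bb)$ is matched position-by-position with the support of $\ba+\eta\bb$, Theorem~\ref{ks} carries the whole burden. The only point requiring a moment's care is that restricting attention to $S^{\perp}\ssm S$ rather than to all nonzero elements of $S^{\perp}$ cannot weaken the bound, which is clear because $S^{\perp}\ssm S$ is a subset of the nonzero codewords of $C$ and the minimum weight over a subset is at least the minimum weight over the whole set.
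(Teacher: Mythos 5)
Your proposal is correct and takes essentially the same route as the paper: both reduce the quantum statement to Theorem~\ref{ks} applied to $S^{\perp}$, viewed via Proposition~\ref{lei} as the classical negacyclic code over $\fps$ generated by the generator polynomial of $S^{\perp}$, whose BCH distance is by definition that of $\qc(\s)$. The only difference is one of explicitness --- the paper's three-line proof leaves implicit both the weight-translation identity (joint weight of $(\ba,\bb)$ equals Hamming weight of $\ba+\eta\bb$) and the harmless restriction from all nonzero elements of $S^{\perp}$ to $S^{\perp}\ssm S$, both of which you correctly spell out.
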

\begin{proof}
 Let $S$ be the totally isotropic set of $\qc(\s)$. From Theorem~\ref{ks}, it follows that the
 minimum Hamming weight of $S^{\perp}$ is at least $d$. Hence, minimum joint weight of $S^{\perp}\ssm S$
 is at least $d$. Thus, the minimum distance of $\qc(\s)$ is at least $d$.
  
\end{proof}

For a $q$-ary quantum stabilizer code $\qc$, the dimension of $\qc$ is of the form $q^k$ for some nonnegative
integer $k$. The integer $k$ is referred as the stabilizer dimension of the code $\qc$.
In the case of nonlinear codes, we also define
the BCH distance of the code as the BCH distance of the polynomial $h(X,\eta)$.
\begin{theorem}\label{mdfgc}
 Let $g(X)h(X,\eta)$ be the canonical factorization associated with a $km$-Frobenius negacyclic code $\qc$ as in
 Theorem~\ref{gnqc}. Then the stabilizer dimension of $\qc$ is $\deg(g)$.
\end{theorem}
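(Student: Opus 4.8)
The plan is to reduce the statement to a single dimension count for the totally isotropic subspace $S$ and then apply the dimension formula for stabilizer codes from Section~\ref{preli}. Recall that if a totally isotropic subspace of $\fpns$ has $\fp$-dimension $n-k$, then the associated stabilizer code has dimension $p^k$; consequently the stabilizer dimension of $\qc$ equals $n-\dim_{\fp}S$. Thus everything comes down to establishing that $\dim_{\fp}S=n-\deg(g)$.

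First I would invoke Theorem~\ref{gnqc} to recall that the subspace $S$ under consideration is uniquely negacyclic with generating pair $(g,ag)$, so that Proposition~\ref{gpfuns} gives the explicit description $S=\{(ug,uag):u\in\rn\}$. The key observation is that the projection $\pi_1\colon S\to F$ onto the first $n$ coordinates, where $F=\{\ba:(\ba,\bb)\in S\}$, is an $\fp$-linear bijection. Surjectivity is immediate from the definition of $F$. Injectivity is precisely the unique negacyclicity of $S$: by Proposition~\ref{gpfuns} the only element of $S$ of the form $(\ze,\bb)$ is $(\ze,\ze)$, so the kernel of $\pi_1$ is trivial. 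Hence $\dim_{\fp}S=\dim_{\fp}F$. Since $F$ is the ideal of $\rn$ generated by the divisor $g(X)$ of $X^n+1$, the family $\{g,Xg,\ldots,X^{n-\deg(g)-1}g\}$ is an $\fp$-basis of $F$, whence $\dim_{\fp}F=n-\deg(g)$. Combining these gives $\dim_{\fp}S=n-\deg(g)$, and the dimension formula then yields that $\qc$ has dimension $p^{\deg(g)}$, i.e.\ stabilizer dimension $\deg(g)$.

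I do not expect a substantive obstacle; the argument is essentially linear algebra over $\fp$ once the structure of $S$ is in hand. The one point that genuinely uses the hypotheses is the injectivity of $\pi_1$, which rests on the unique negacyclicity of $S$ rather than on mere simultaneous negacyclicity. It is worth noting that this projection argument is uniform in the extension degree $k$ and therefore avoids the case-by-case cardinality computation (via $\deg(g)+k\deg(h)=n$) that one would otherwise carry out; indeed the second coordinate $ag$ never enters the dimension count, since the first coordinate alone already determines each element of $S$.
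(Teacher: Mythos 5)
Your proof is correct and takes essentially the same approach as the paper: the paper also reduces the claim to showing $\dim_{\fp}S=n-\deg(g)$, identifying $S$ with the ideal $\ideal{g}\subset\rn$ via the parametrization $u\mapsto(ug,uag)$ (your first-coordinate projection is exactly the inverse of this identification) and counting the basis $\{g,Xg,\ldots,X^{n-\deg(g)-1}g\}$. The only cosmetic difference is that you justify injectivity through unique negacyclicity and Proposition~\ref{gpfuns}, whereas the paper observes directly that $(ug,uag)=(vg,vag)\mxn$ if and only if $ug=vg\mxn$.
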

\begin{proof}
 Let $a(X)$ be the polynomial defined as in Theorem~\ref{gnqc}. Let $S$ be the totally isotropic
 subspace associated to $\qc$ which is a uniquely negacyclic subspace generated by $(g,ag)$ as
 stated in Theorem~\ref{gnqc}.
 From Proposition~\ref{gpfuns}, we have $S=\{(ug,uag)\in\rn\times\rn : u\in\rn\}$.
 It is easy to check that $(ug,uag)
 =(vg,vag)\mxn$ for some $u,v\in\rn$ \ifif $ug=vg\mxn$. Thus, the cardinality of $S$ is same as the cardinality of the
 ideal $\ideal{g}\subset\rn$. Now, $\{g,Xg,\ldots,X^{n-\deg(g)-1}g\}$ is a basis of $\ideal{g}$ as a subspace of
 $\rn$ over $\fp$. Thus, the cardinality of $S$ is $p^{n-\deg(g)}$ and hence the dimension of $S$ as a subspace of
 $\fpns$ over $\fp$ is $n-\deg(g)$. Therefore, the stabilizer dimension of $\qc$ is $\deg(g)$.
  
\end{proof}

\begin{corollary}\label{mdlnc}
 Let $\qc$ be a linear $2m$-Frobenius negacyclic code over $\fp$ with canonical factorization $g\cdot h$. Then the
 stabilizer dimension of $\qc$ is $\deg(g)$. Also, the dual $S^{\perp}$ of $S$ \wrt the symplectic inner product
 is the ideal generated by $h(X,\eta)$, and hence the minimum distance of $\qc$ is at least the BCH distance
 of $h$.
\end{corollary}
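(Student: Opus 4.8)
The plan is to assemble the statement from results already established, since the corollary is precisely the specialization of the general construction to the quadratic case $k=2$. A linear $2m$-Frobenius negacyclic code is exactly a code of the type produced in Theorem~\ref{gnqc} with $k=2$ (so that $km=2m$), whose canonical factorization $g\cdot h$ arises from Theorem~\ref{lcfgc} with $\alpha=-c_0^{-1}$. In particular, the associated totally isotropic subspace $S$ is the uniquely negacyclic subspace generated by $(g,ag)$ for the polynomial $a$ defined there, and its image in $\rn(\eta)$ is the ideal generated by $g(X)h(X,\eta)$.

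For the stabilizer dimension, I would invoke Theorem~\ref{mdfgc} directly: applied with $km=2m$ it yields that the stabilizer dimension of $\qc$ equals $\deg(g)$. No further computation is needed, since the cardinality count using the basis $\{g,Xg,\ldots,X^{n-\deg(g)-1}g\}$ of $\ideal{g}\subset\rn$ carried out in that proof does not depend on the choice of $k$. For the dual, I would appeal to the last part of Theorem~\ref{lcfgc}, which in the case $k=2$ shows that the image of $S^{\perp}$ under $(u,v)\mapsto u+\eta v$ is an ideal of $\rn(\eta)$ generated by $h(X,\eta)$.

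Finally, by the definition of the BCH distance of a linear negacyclic quantum stabilizer code, the BCH distance of $\qc$ is the BCH distance of the generator polynomial of $S^{\perp}$, namely $h(X,\eta)$; and by the theorem asserting that the minimum distance of such a code is at least its BCH distance, the minimum distance of $\qc$ is at least the BCH distance of $h$. Since every ingredient is already in place, there is no genuine obstacle in this argument — the only point that requires care is to confirm that the code in the corollary really falls under the hypotheses of Theorem~\ref{lcfgc}, i.e., that its canonical factorization $g\cdot h$ meets the conditions of Theorem~\ref{ncflc} (in particular $g(-X)=g(X)$ together with the parity condition, which here holds automatically since $t=2m$ is even), so that the explicit identification of $S^{\perp}$ with the ideal $\ideal{h}$ is applicable.
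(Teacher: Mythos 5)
Your proposal is correct and matches the paper's intended argument: the corollary is left without a separate proof precisely because it is the assembly you describe --- Theorem~\ref{mdfgc} with $km=2m$ for the stabilizer dimension, Theorem~\ref{lcfgc} for the identification of $S^{\perp}$ with the ideal $\ideal{h}$, and the definition of BCH distance together with the preceding theorem bounding the minimum distance from below. Your closing caveat about the hypotheses of Theorem~\ref{lcfgc} is harmless but already built in, since ``with canonical factorization $g\cdot h$'' by definition means the code arises from that construction.
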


We now show that the same result holds even for nonlinear negacyclic codes.
\begin{theorem}
 Let $g(X)h(X,\eta)$ be the canonical factorization associated with a $km$-Frobenius negacyclic code $\qc$ as in
 Theorem~\ref{gnqc}. If the BCH distance of $h(X,\eta)$ is $d$, then the minimum distance of $\qc$
 is at least $d$.
\end{theorem}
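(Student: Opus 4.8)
The plan is to bound the quantum distance by reducing it to the classical BCH bound of Theorem~\ref{ks}. Recall from Theorem~\ref{mdfgc} that the totally isotropic subspace $S$ attached to $\qc$ is the uniquely negacyclic subspace $\{(ug,uag):u\in\rn\}$ generated by $(g,ag)$, and that by definition the minimum distance of $\qc$ is the least joint weight of an element of $S^{\perp}\ssm S$. Since in the nonlinear case $S^{\perp}$ need not be an ideal of $\rn(\eta)$, I cannot simply invoke the ideal-theoretic argument of Corollary~\ref{mdlnc}. Instead I would construct an explicit $\fp$-linear, support-preserving, injective map from $S^{\perp}$ into the $\fpe$-negacyclic code $\ideal{h}$ generated by $h(X,\eta)$ in $\rn(\eta)$, and then read off the weight bound from the classical minimum distance of $\ideal{h}$.

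First I would determine $S^{\perp}$ explicitly. As the generating pair of $S$ is $(g,ag)$ with $a(X)\in\fp[X]$, the second part of Proposition~\ref{icip} says that $(u,v)\in S^{\perp}$ \ifif $u(X)(ag)(X^{-1})=v(X)g(X^{-1})\mxn$. Invoking $a(X^{-1})=a(X)\mxn$ from Claim~\ref{axi=ax}, this collapses to $g(X^{-1})\bigl(u(X)a(X)-v(X)\bigr)=0\mxn$. By Proposition~\ref{pt+1} with $t=km$ together with $g(-X)=g(X)$ one has $g(X^{-1})=g^{p^{km}}(X)$, which is a unit modulo $h$ because $g$ and $h$ are coprime; reducing the last relation modulo the factor $h$ of $\frac{X^n+1}{g}$ yields $v\equiv ua\pmod h$.

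The key move is then to set $z(X):=v(X)-\alpha\eta\,u(X)\in\rn(\eta)$. The $i=0$ congruence in the definition of $a$ in Theorem~\ref{gnqc} is $a\equiv\alpha\eta\pmod h$, so $z\equiv u(a-\alpha\eta)\equiv0\pmod h$, i.e.\ $z\in\ideal{h}$. Moreover, because $\alpha$ is a nonzero element of $\fp$ and $\eta\in\fpe\ssm\fp$, the coefficient $v_j-\alpha\eta\,u_j$ of $X^j$ in $z$ vanishes precisely when $u_j=v_j=0$; hence $z\neq0$ as soon as $(u,v)\neq(\ze,\ze)$, and the Hamming weight of $z$ over $\fpe$ equals the joint weight ${\rm wt}(u,v)$. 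For $(u,v)\in S^{\perp}\ssm S$ the element $z$ is a nonzero codeword of $\ideal{h}$, so Theorem~\ref{ks} forces ${\rm wt}(u,v)={\rm wt}_{\fpe}(z)\geqslant d$, and the minimum distance of $\qc$ is at least $d$.

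I expect the crux to be isolating the single combination $z=v-\alpha\eta u$: it is exactly what substitutes for the missing $\fpe$-linearity of $S^{\perp}$, turning a merely $\fp$-linear dual into a genuine codeword of the $\fpe$-linear code $\ideal{h}$ while losing no support. Verifying that $z$ really lands in $\ideal{h}$ uses only the $i=0$ piece of the Chinese-remainder description of $a$, and the degree identity $\deg g+k\deg h=n$ (valid once $k\geqslant2$) shows the resulting embedding $S^{\perp}\hookrightarrow\ideal{h}$ is consistent on cardinalities. The remaining steps are routine bookkeeping with the symplectic duality and the BCH machinery already established.
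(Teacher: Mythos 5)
Your proposal is correct, and its core is exactly the paper's: both arguments rest on the single observation that for $(u,v)\in S^{\perp}$ the combination $\pm(\alpha\eta u-v)$ has Hamming weight over $\fpe$ equal to the joint weight of $(u,v)$ (because $\alpha\neq 0$ and $1,\eta$ are linearly independent over $\fp$), lies in the classical negacyclic code $\ideal{h(X,\eta)}\subset\rn(\eta)$, and is nonzero whenever $(u,v)\neq(\ze,\ze)$, so that Theorem~\ref{ks} gives the bound on $S^{\perp}\ssm S$. Where you genuinely diverge is in how membership in $\ideal{h}$ is established. The paper first proves, as a standalone claim, the full parametrization $S^{\perp}=\left\{\left(u,au+v\frac{X^n+1}{g}\right):u,v\in\rn\right\}$ --- the inclusion $\supseteq$ by a direct symplectic computation via Proposition~\ref{icip}, and the reverse inclusion by matching cardinalities ($p^{n+\deg(g)}$ on both sides) --- and then substitutes $v=au+\tilde{v}\frac{X^n+1}{g}$ to get $\alpha\eta u-v\equiv 0\bmod h$. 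You instead extract only the one-way implication actually needed, $v\equiv au\pmod{h}$, directly from the membership criterion of Proposition~\ref{icip} (legitimate here, since Theorem~\ref{gnqc} asserts $S$ is uniquely negacyclic with generating pair $(g,ag)$), combined with $a(X^{-1})=a(X)$ from Claim~\ref{axi=ax}, $g(X^{-1})=g^{p^{km}}(X)$ from Proposition~\ref{pt+1} and $g(-X)=g(X)$, and the invertibility of $g$ modulo $h$ (coprimality is a hypothesis of Theorem~\ref{gnqc}). This is leaner: it dispenses entirely with the cardinality count, which is the only nontrivial half of the paper's claim, at the cost of not producing the explicit description of $S^{\perp}$ --- a description that has independent interest and mirrors Corollary~\ref{mdlnc} in the nonlinear setting. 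Your closing aside about $\deg(g)+k\deg(h)=n$ and cardinalities is only a sanity check (for $k>2$ the embedding into $\ideal{h}$ is proper, as it must be) and plays no role in the bound; likewise the sign difference between your $z=v-\alpha\eta u$ and the paper's $\alpha\eta u-v$ is immaterial for weights.
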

\begin{proof}
 To prove this theorem, we first prove the following claim:
 \begin{claim}
  The dual $S^{\perp}$ of $S$ \wrt the symplectic inner product satisfies
  \begin{equation*}
   S^{\perp}=\left\{\left(u,au+v\frac{X^n+1}{g}\right)\in\rn\times\rn : u,v\in\rn\right\}.
  \end{equation*}
 \end{claim}
 \proof Assume $\tilde{S}$ to be the set
 $\left\{\left(u,au+v\frac{X^n+1}{g}\right)\in\rn\times\rn : u,v\in\rn\right\}$. First we show that
 $\tilde{S}\subset S^{\perp}$. For $u,v\in\rn$, using Proposition~\ref{pt+1} and Claim~\ref{axi=ax}
 we obtain
 \begin{align*}
  & u(X)a(X^{-1})g(X^{-1})-\left(a(X)u(X)+v(X)\frac{X^n+1}{g(X)}\right)g(X^{-1})\\
  =& u(X)a(X^{-1})g^{p^{km}}(X)-\left(a(X)u(X)+v(X)\frac{X^n+1}{g(X)}\right)g^{p^{km}}(X)\mxn\\
  =& u(X)a(X^{-1})g^{p^{km}}(X)-u(X)a(X)g^{p^{km}}(X)\mxn\\
  =& 0\mxn.
 \end{align*}
 Because $S$ is generated by $(g,ag)$, from Proposition~\ref{icip} and the above computations we conclude that
 $\left(u,au+v\frac{X^n+1}{g}\right)\in S^{\perp}$ for all $u,v\in\rn$. From the first part of the proof it
 follows that the dimension of $S^{\perp}$ is $n+\deg(g)$ as a subspace of $\fpns$ over $\fp$. Thus, the
 cardinality of $S^{\perp}$ is $p^{n+\deg(g)}$. To find the cardinality of $\tilde{S}$, observe that
 $au=av\mxn$ for some $u,v\in\rn$ \ifif $u=v$. Let $B$ denote the set $\left\{v\frac{X^n+1}{g}\in\rn : v\in\rn\right\}$.
 Thus the cardinality of $\tilde{S}$ is the product of the cardinalities of $\rn$ and $B$. Note that $B$ is an ideal
 of $\rn$ generated by $\frac{X^n+1}{g}$. By arguing using a basis of $B$, we find that
 the cardinality of $B$ is $p^{n-(n-\deg(g))}
 =p^{\deg(g)}$. Therefore, the cardinality of $\tilde{S}=p^np^{\deg(g)}=p^{n+\deg(g)}$. Hence $\tilde{S}=S^{\perp}$.
 This proves the claim.
 
 To prove that the minimum distance of $\qc$ is at least $d$, it is enough
 to show that the minimum joint weight of $S^{\perp}$
 is at least $d$. Let $(\bu,\bv)\in S^{\perp}$. The joint weight of
 $(\bu,\bv)$ is same as the Hamming weight of $\alpha\eta\bu-\bv\in\fpen$, and $\alpha$ be as in
 Theorem~\ref{gnqc}. Thus it only remains to show that $h(X,\eta)$ divides $\alpha\eta u(X)-v(X)$, because from
 Theorem~\ref{ks} it follows that the Hamming weight of $\alpha\eta\bu-\bv$ is at least $d$.
 From the previous claim it follows that there exists $\tilde{v}\in\rn$ such that $v=au+\tilde{v}\frac{X^n+1}{g}$.
 On substituting this value of $v$ and $a=\alpha\eta\mod h$, we obtain
 \begin{align*}
  \alpha\eta u-v &= \alpha\eta u-\left(au+\tilde{v}\frac{X^n+1}{g}\right)\\
  &=-\tilde{v}\frac{X^n+1}{g} \mod h=0 \mod h.
 \end{align*}
 This completes the proof of the theorem.
  
\end{proof}

To illustrate this construction, here we discuss an example of linear negacyclic quantum stabilizer codes over the
field $\mathbb{F}_3$. We fix $n=3^2+1$ and the quadratic extension $\mathbb{F}_3(\eta)$, where
$\eta$ satisfy the irreducible $X^2+1$. It can be seen that $X^{10}+1$ splits as $(X^2+1)(X^4+X^3+2X+1)
(X^4+2X^3+X+1)$ over $\mathbb{F}_3$. Since there is no linear factor, we choose $g(X)=X^2+1$ for this example
which satisfy the condition $g(-X)=g(X)$. Further, the factorization of $\frac{X^{10}+1}{X^2+1}$ over
$\mathbb{F}_3(\eta)$ is $(X^2+(\eta+2)X+2)(X^2+(2\eta+2)X+2)(X^2+(\eta+1)X+2)(X^2+(2\eta+1)X+2)$. We
choose $h(X,\eta)=(X^2+(\eta+2)X+2)(X^2+(2\eta+1)X+2)=X^4+(2\eta+1)X^2+1$. It follows that $h(-X,\eta)=h(X,\eta)$ and
$gh\sigma(h)=X^{10}+1$. Therefore, $gh$ gives a canonical factorization for a linear negacyclic quantum stabilizer code
as in Theorem~\ref{lcfgc}. To obtain the BCH distance of $h$, fix a root $\beta$ of $(X^2+(\eta+2)X+2)$. Then
$\beta$ is a primitive $20$-th root of unity, and it can be checked that $\beta^9$ and $\beta^{11}$ are two
roots of $h$. Thus BCH distance of $h$ is $3$. Note that the quantum code obtain here is a $10$ qubit code and
its stabilizer dimension is $\deg(g)=2$. Hence we get a $[[10,2,3]]_3$ code.

In the following three tables, we list down some more examples obtained in similar manner. These tables contain
both linear and nonlinear codes over $\mathbb{F}_3$, $\mathbb{F}_5$ and $\mathbb{F}_7$ respectively.
The nonlinear codes are distinguished by the superscript symbol ``star''.
Note that the distances given in these tables are BCH distances.

\begin{table}[h]
 \centering
 \begin{minipage}[b]{0.33\textwidth}
  \centering
  \begin{tabular}{|m{1.2cm}|m{2cm}|}
   \hline
   Length & Parameters\\ \hline
   $10$ & $[[10,2,3]]_3$\\ \hline
   $28$ & $[[28,4,3]]_3^*$, $[[28,16,3]]_3^*$\\ \hline
   $34$ & $[[34,2,4]]_3$\\ \hline
   $50$ & $[[50,2,4]]_3$, $[[50,10,3]]_3$\\ \hline
   $58$ & $[[58,2,5]]_3$\\ \hline
   $76$ & $[[76,4,3]]_3^*$, $[[76,40,3]]_3^*$\\ \hline
   $82$ & $[[82,2,7]]_3$, $[[82,18,6]]_3$, $[[82,34,4]]_3$, $[[82,50,3]]_3$, $[[82,66,3]]_3$\\ \hline
  \end{tabular}
  \caption{Codes over $\mathbb{F}_3$}
  \label{cof3}
 \end{minipage}
 \begin{minipage}[b]{0.33\textwidth}
  \centering
  \begin{tabular}{|m{1.2cm}|m{2cm}|}
   \hline
   Length & Parameters\\ \hline
   $14$ & $[[14,2,3]]_5^*$\\ \hline
   $18$ & $[[18,6,3]]_5^*$\\ \hline
   $26$ & $[[26,2,5]]_5$, $[[26,10,3]]_5$, $[[26,18,3]]_5$\\ \hline
   $34$ & $[[34,2,4]]_5$\\ \hline
   $42$ & $[[42,6,5]]_5^*$, $[[42,18,3]]_3^*$, $[[42,30,3]]_5^*$\\ \hline
   $54$ & $[[54,6,5]]_5^*$, $[[54,18,3]]_5^*$\\ \hline
   $74$ & $[[74,2,5]]_5$\\ \hline
   $82$ & $[[82,2,7]]_5$, $[[82,42,4]]_5$\\ \hline
  \end{tabular}
  \caption{Codes over $\mathbb{F}_5$}
  \label{cof5}
 \end{minipage}
 \begin{minipage}[b]{0.33\textwidth}
  \centering
  \begin{tabular}{|m{1.2cm}|m{2cm}|}
   \hline
   Length & Parameters\\ \hline
   $10$ & $[[10,2,3]]_7$\\ \hline
   $26$ & $[[26,2,3]]_7^*$, $[[26,2,5]]_7$\\ \hline
   $34$ & $[[34,2,4]]_7$\\ \hline
   $50$ & $[[50,2,7]]_7$, $[[50,10,6]]_7$, $[[50,18,5]]_7$, $[[50,26,4]]_7$, $[[50,34,3]]_7$, $[[50,42,3]]_7$\\ \hline
   $82$ & $[[82,2,6]]_7$\\ \hline
  \end{tabular}
  \caption{Codes over $\mathbb{F}_7$}
  \label{cof7}
 \end{minipage}
\end{table}

\subsection*{Acknowledgements}
The second author was supported by the Seed Grant from IRCC, IIT Bombay.


\bibliography{tFNCBib}
\bibliographystyle{plain}

\mbox{}\\ \\
Priyabrata Bag\\
Department of Mathematics\\
Indian Institute of Technology Bombay\\
Mumbai, Maharashtra 400076, India\\
E-mail: {\it priyabrata@iitb.ac.in}\\ \\
Santanu Dey\\
Department of Mathematics\\
Indian Institute of Technology Bombay\\
Mumbai, Maharashtra 400076, India\\
E-mail: {\it santanudey@iitb.ac.in}

\end{document}